\newcommand{\allT}{T}
\newcommand{\allD}{\xi}
\newcommand{\pop}{\text{pop}}
\newcommand{\tree}{\tau}
\crefname{hypothesis}{Hypothesis}{Hypotheses}
\title{Metropolized Forest Recombination for Monte Carlo Sampling of Graph Partitions.\thanks{Submitted to the editors 10/13/2020.
\funding{This work was funded by the NSF grant DMS-1613337}}}
\author{Eric Autry\thanks{Department of Mathematics, Duke University}
\and Daniel Carter\thanks{North Carolina School of Science and Mathematics, Durham NC (Currently Princeton University)}
\and Gregory Herschlag\thanks{Department of Mathematics, Duke University (\email{gjh@math.duke.edu})}
\and Zach Hunter\thanks{North Carolina School of Science and Mathematics, Durham NC (Currently Oxford University)}
\and Jonathan C. Mattingly\thanks{Department of Mathematics and Department of Statistical Science, Duke University (\email{jonm@math.duke.edu})}
}
\begin{document}

\maketitle

\begin{abstract}
We develop a new Markov chain on graph partitions that makes relatively global moves yet is computationally feasible to be used as the proposal in the Metropolis-Hastings method. Our resulting algorithm can be made reversible and able to sample from a specified measure on partitions. Both of these properties are critical to some important applications and computational Bayesian statistics in general. Our proposal chain modifies the recently developed method called Recombination (ReCom), which draws spanning trees on joined partitions and then randomly cuts them to repartition. We improve the computational efficiency by augmenting the state space from partitions to spanning forests. The extra information accelerates the computation of the forward and backward proposal probabilities. We demonstrate this method by sampling redistricting plans and find promising convergence results on several key observables of interest.
\end{abstract}

\begin{keywords}
  Metropolis-Hastings, Markov Chain Monte Carlo, Spanning Trees, Balanced Graph Partitions
\end{keywords}

\begin{AMS}
  65C05, 91D10, 91D20, 91F10
\end{AMS}

\section{Introduction}
Graph partition problems have become important in a variety of fields including clustering and detection of cliques in social networks (e.g. \cite{fortunato2010community}), pathological and biological networks (e.g. \cite{junker2011analysis}), infrastructural networks such as roads and aiport controls (e.g. \cite{ich2006extremely,mohring2007partitioning}), image decomposition (e.g. \cite{camilus2012review}), and problems in redistricting (e.g. \cite{Chikina_Frieze_Pegden_2017,herschlag2020quantifying,deford2019redistricting}) (for a review see \cite{bulucc2016recent}).  Typically, graph partition problems involve optimizing an objective function, such as minimizing the sum of edge weights spanning two partitions \cite{bulucc2016recent} or minimizing communication loads across partitions \cite{hendrickson2000graph,bader2013graph}. \footnote{This article was originally released in October 2019 with the name ``A Merge-Split Proposal for Reversible Monte Carlo Markov Chain Sampling of Redistricting Plans'' to emphasize the imagery of merging and then splitting partitions and the initial motivation of generating  Redistricting Plans. We changed the name to better indicate the connection to the ReCom algorithm and the feasibility of our algorithm to mesh with the   Metropolis-Hastings algorithm and produce sample from a specified target distribution. As we also wanted to emphasis the broader application to graph partitioning as our appreciation for these directions has grown.}

  There is, however, a growing interest in inquiring into the properties of a `typical' graph partition given certain partitioning criteria, particular in spatial clustering and Bayesian statistics (e.g. \cite{anderson2017spatial,balocchi2019crime}). Formally, one places a probability measure on the space of graph partitions and then investigates typical structures under this distribution. In practice partitions are sampled with techniques such such as Monte Carlo algorithms. 

Commonly, these methods invoke Markov Chain Monte Carlo methods in which one begins with some non-zero probability state (partition) and takes a random walk on the state space. 
A simple formulation of such a walk consists of choosing a single vertex on the boundary of a partition and proposing to exchange it to a neighboring partition 
\cite{macmillan2001redistricting,MattinglyVaughn2014,QuantifyingGerrymandering,Wu15,jcmReport,Chikina_Frieze_Pegden_2017,chikina2019separating}.  
Using ideas drawn from sampling spin glass models \cite{swendsen1987nonuniversal}, one group has examined the possibility of using percolation clusters to accelerate vertex exhange \cite{fifield2015}.

One common constraint is to demand balanced partitions, which is to say that the sums of node weights roughly match across partitions. Although the boundary exchange methods mentioned above are provably ergodic under soft constraints, the chains can have difficulties mixing due to large energetic barriers (though they still prove useful in some practical settings \cite{jcmReport}). As a potential remedy to these issues, a research group proposed a recombination (ReCom) algorithm which joins adjacent partitions, draws a spanning tree on the joined partitions, and generates a new partition by cutting the new spanning tree \cite{moonVa, deford2019redistricting,DeFord2018,DeFordDuchinPrivite}.

The ReCom method has demonstrated positive mixing properties on balance partitions on planar graphs, however it samples from an unknown invariant distribution on the space of partitions. \footnote{It is worth mentioning that there are a series of other alternative algorithms to ReCom that algorithmically generate random graph partitions if one is not concerned with knowing the sampling measure or is not interested in having the flexibility of specifying the sampled measure.  For example, the constructive algorithms of Chen and Rodden \cite{ChenRodden13} randomly merge adjacent nodes and the genetic algorithms of Cho \cite{Liu16} involve merging two partitions to create a new partition.  These procedures also do not have a known invariant measure nor are practical candidates for Metropolizing.} The algorithm has been shown to sample partitions from a measure that is `close' to sampling a space of balanced partitions weighted according to the product of the number of spanning trees that can be drawn within each partition.  However, it is unclear how to quantify this relationship generically. Additionally, it is computationally infeasible to use the original ReCom scheme as a proposal in a reversible Metropolis-Hastings scheme. Hence ReCom cannot be effectively  used to sample from a  specified measures of interest. This limits its usefulness in many applications which require an MCMC scheme which sample from a specified measure of interest.  This note shows how to extend the ideas in ReCom to make Metropolizing feasible. 

In the current work, we use a modified ReCom Markov chain as a proposal in order to develop a reversible Markov Chain Monte Carlo (MCMC) algorithm to sample a specified distribution on the space of balanced graph partitions. 
The global moves of ReCom promise faster mixing MCMC algorithms when used as the proposal chain, and the added (reversible) rejection step allows us to specify and alter the underlying measure being sampled. 
We alter the ReCom algorithm by extending graph partitions to track persistent spanning trees within each partition. We demonstrate that this expansion of the state space allows us to construct a reversible Markov chain that is able to completely propose redrawn pairs of adjacent partitions.

We demonstrate our algorithm in the context of redistricting. In this problem, we consider a geographical region which is to be split into several districts (i.e. graph partitions).  The region typically is comprised of a collection of small scale unit that will form the nodes of graph, and edges between nodes are pre-specified, often via geographic adjacency.  

In general, graph partition problems, either in the context of optimization or sampling are NP-hard or even NP-complete (e.g. see \cite{bulucc2016recent} and \cite{njatDedfordSolomon2019graphs}, respectively). From the sampling perspective, we are often not interested in recovering a distribution on partitions, but rather several observables of interest. This is analogous to sampling on molecular dynamic problems and recovering observables of interest such as chemical potentials. With this in mind, we present numerical evidence that the complexity of recovering robust statistics on low-dimensional observables of interest is significantly faster than attempting to recover the correct distribution on all partitions via sampling.

\begin{figure}
\centering
\subcaptionbox{Graph\label{sfig:overviewGraph}}{\includegraphics[width=0.27\linewidth, clip = true, trim = {4cm 5cm 3cm 4cm}]{./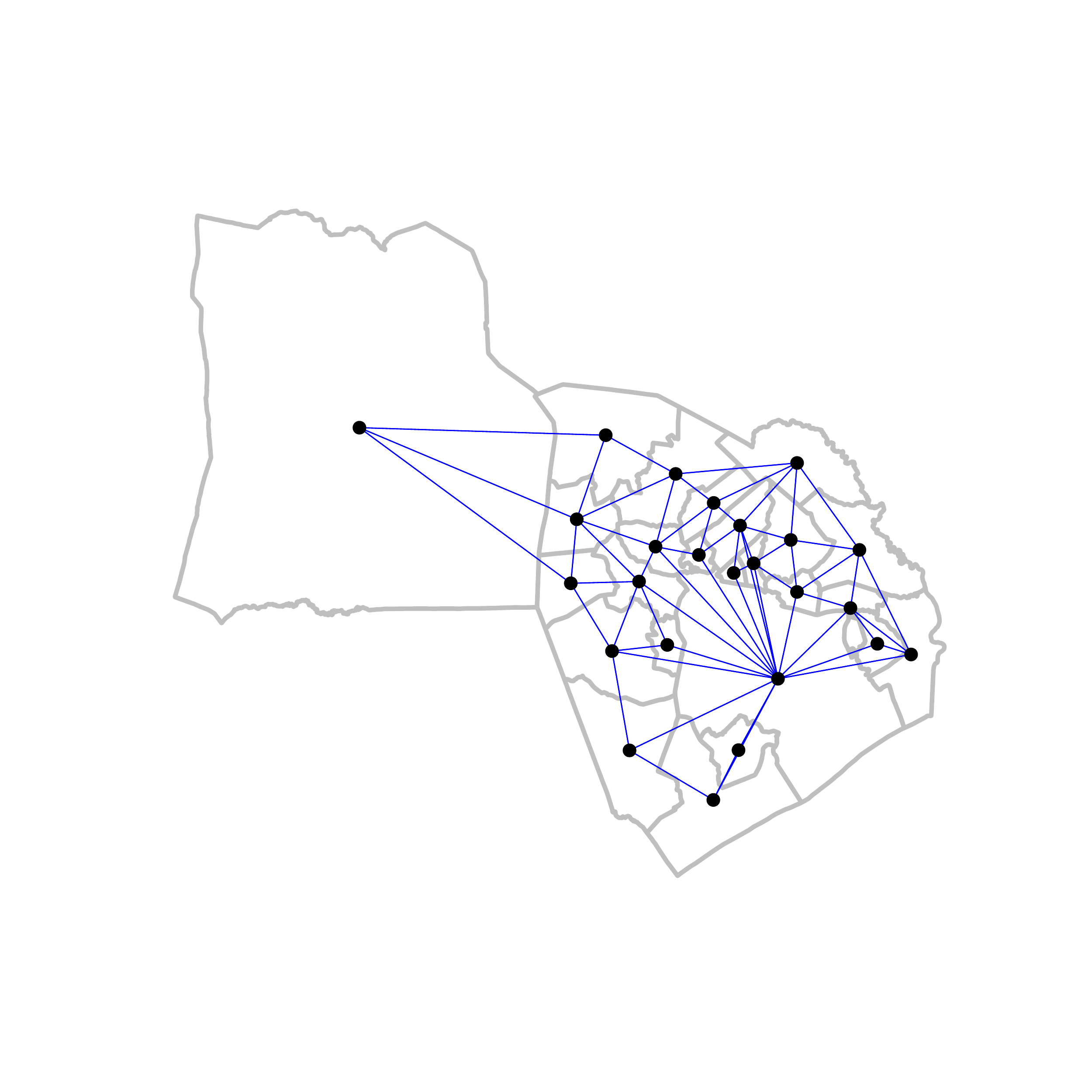}}\qquad
\subcaptionbox{Graph Partition\label{sfig:overviewGraphs}}{\includegraphics[width=0.27\linewidth, clip = true, trim = {4cm 5cm 3cm 4cm}]{./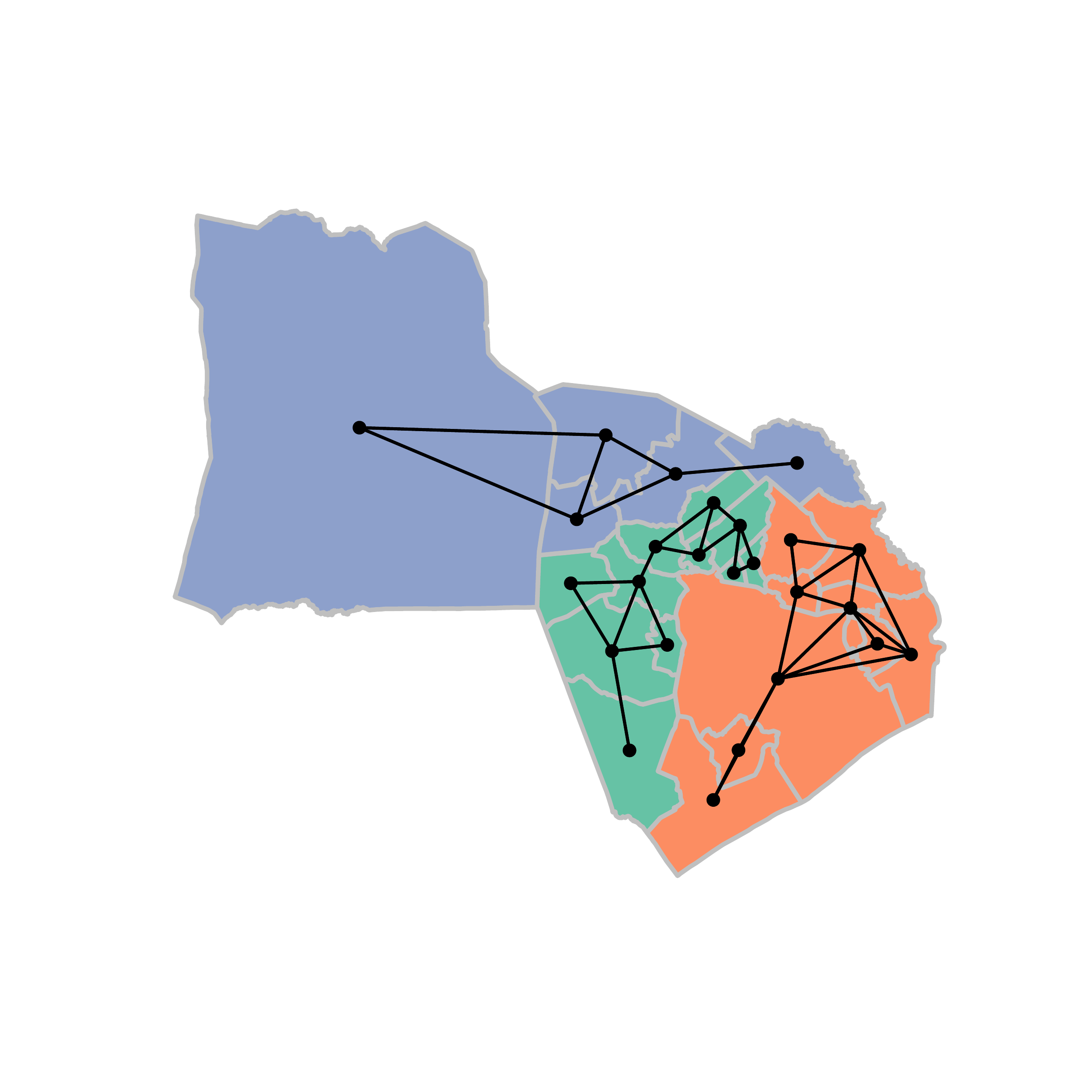}}\qquad
\subcaptionbox{Tree Partition (extended state)\label{sfig:overviewTrees}}{\includegraphics[width=0.27\linewidth, clip = true, trim = {4cm 5cm 3cm 4cm}]{./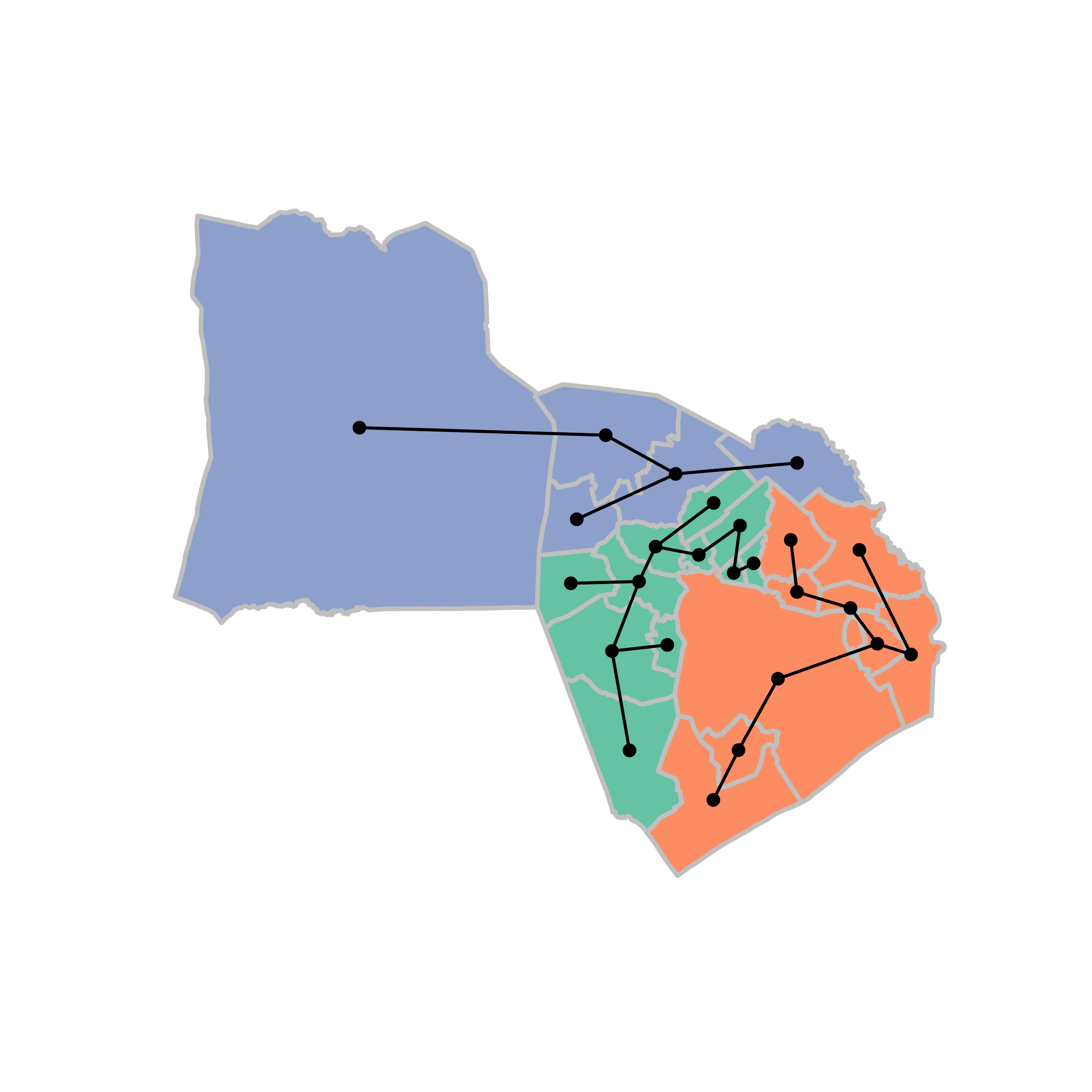}}

\subcaptionbox{Merge \& Sample New Tree on Merged Subgraph\label{sfig:overviewMerge}}{\includegraphics[width=0.27\linewidth, clip = true, trim = {4cm 5cm 3cm 4cm}]{./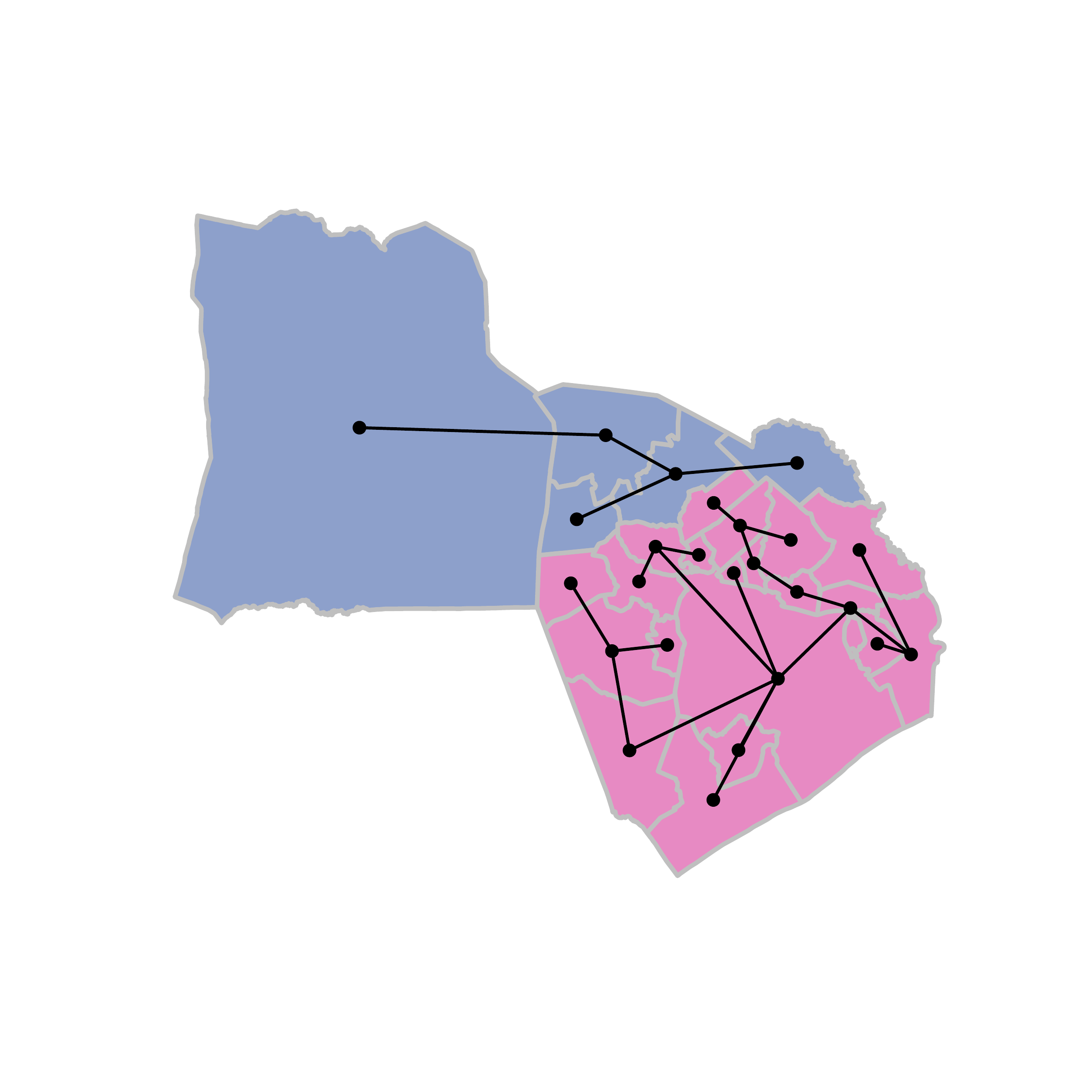}}\qquad
\subcaptionbox{Find Edge to Cut\label{sfig:overviewCut}}{\includegraphics[width=0.27\linewidth, clip = true, trim = {4cm 5cm 3cm 4cm}]{./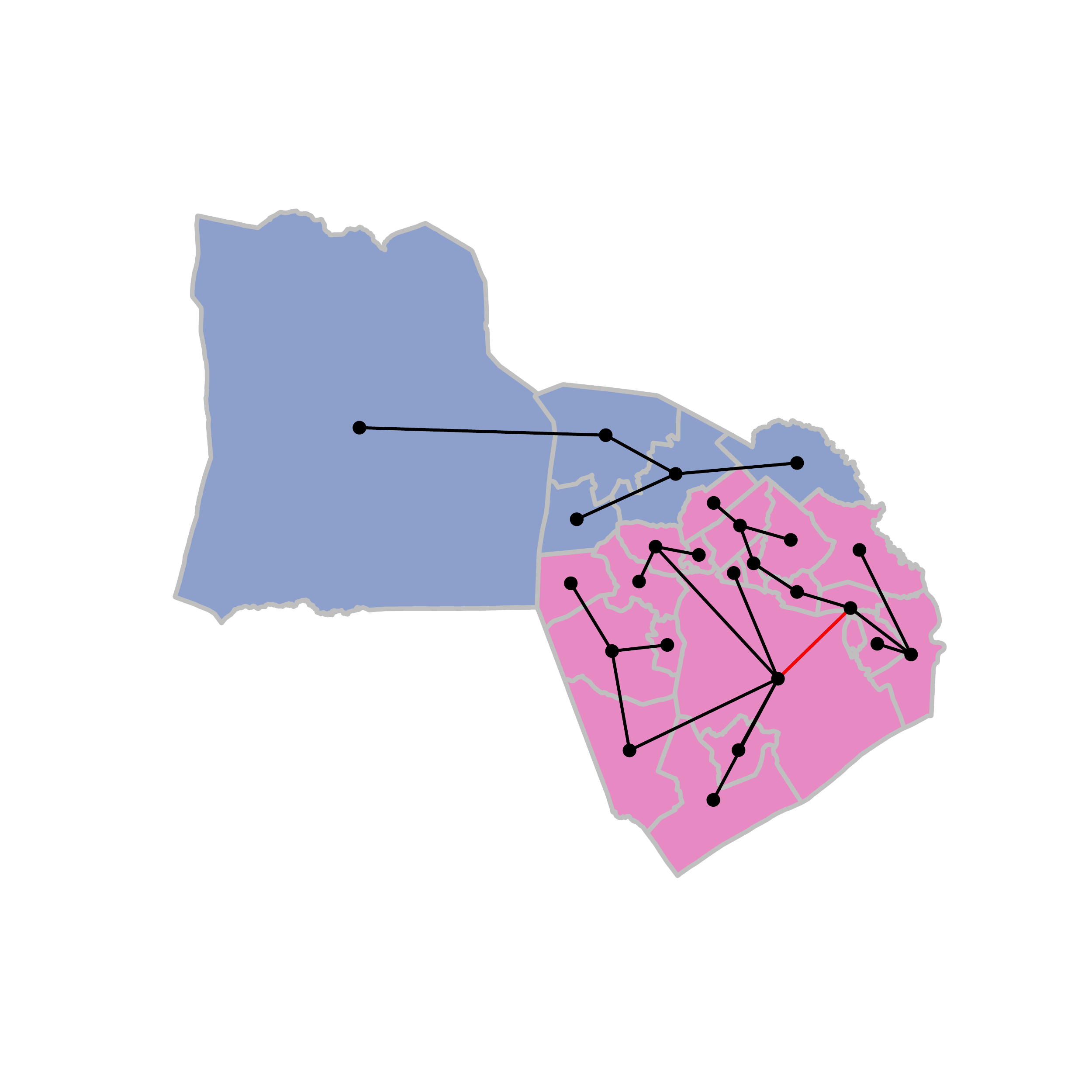}}\qquad
\subcaptionbox{Split into Two Trees\label{sfig:overviewSplit}}{\includegraphics[width=0.27\linewidth, clip = true, trim = {4cm 5cm 3cm 4cm}]{./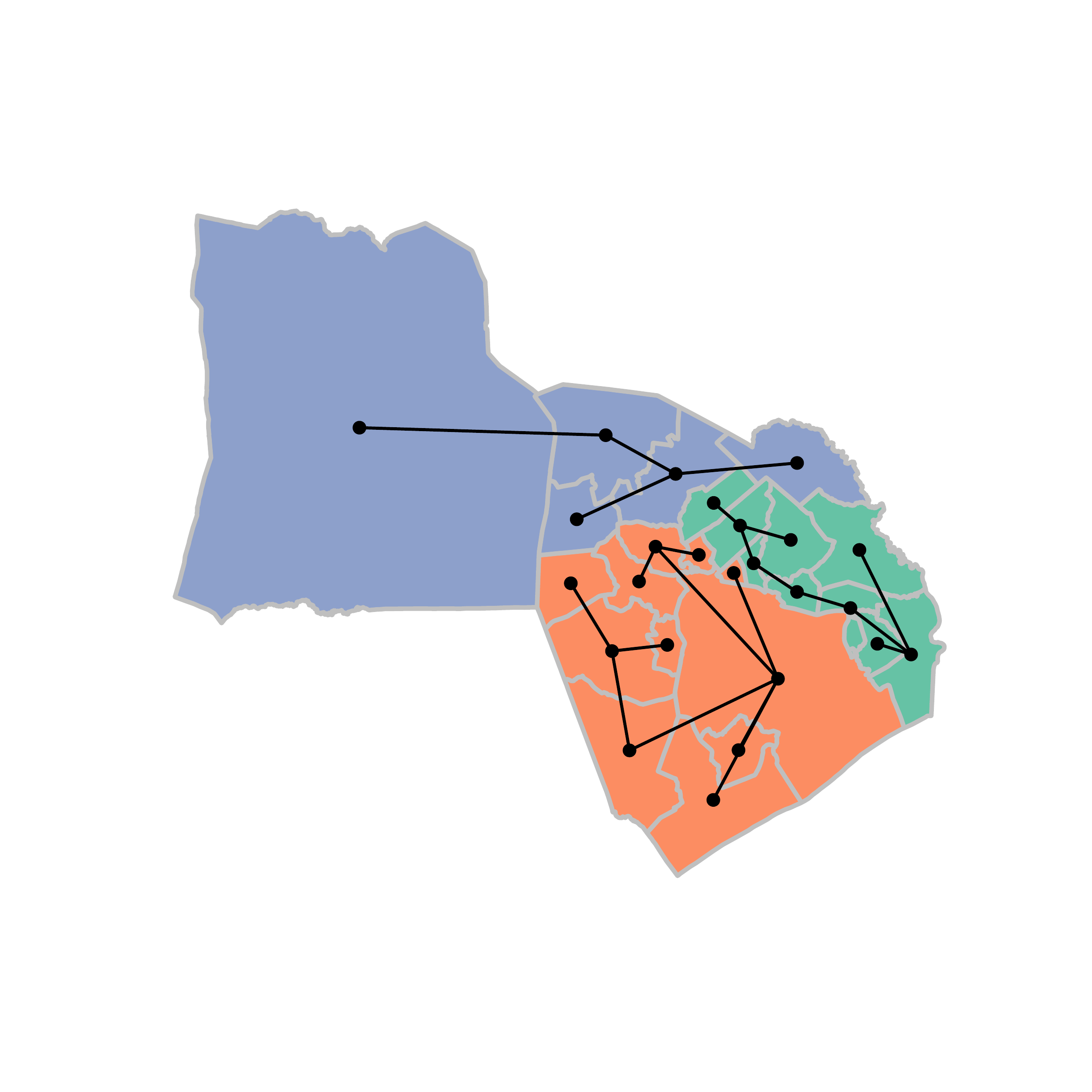}}
\caption{As an illustrative example, we consider a graph induced geometric regions (A). We then show how a partition of the graph induces subgraphs (B), how we extend the state to consider trees on the subgraphs (C), how a merge step might look (D), how edges are removed and how a subsequent split step might look (E, F).}
\label{fig:overview}
\end{figure}
\section{Informal Overview of the Metropolized Forest Recombination Algorithm}
Although our algorithm is one for generic graph partitioning, we illustrate it in the context of redistricting as a concrete example is useful when explaining the algorithm.
Typically, a region is partitioned by assigning small atomic geographic elements to a district or partition (see Figures~\ref{sfig:overviewGraph}~and~\ref{sfig:overviewGraphs}). At the highest level, the core of the algorithm, is similar to ReCom \cite{moonVa, deford2019redistricting,DeFord2018}. We pick two adjacent partitions, merge them into a single subgraph, and randomly redivide the merged subgraph into two new balanced partitions. This process is then repeated.


In ReCom, the new partition is constructed by first generating a random spanning tree on the merged subgraph (Figure~\ref{sfig:overviewMerge}). This spanning tree can then be used to efficiently divide the merged subgraph into two balance partitions.  In the redistricting problem, for example, balance would mean that each partition contain a population that are equal up to some tolerance (Figure~\ref{sfig:overviewCut}). One such partition is chosen randomly from those which are possible by cutting the given spanning tree in two (Figure~\ref{sfig:overviewSplit}).

We are interested in re-imagining ReCom so that it is practical as the random proposal in a Metropolis-Hastings MCMC algorithm. To employ Metropolis-Hastings, we need to be able to calculate the probability of proposing a particular move from an initial state and also the probability of proposing the reverse move from the proposed state back to the initial state. In ReCom, these calculations are impractical without further insight. 

To make the calculations feasible, we enlarge the state-space form graph parititons to spanning forests (Figure~\ref{sfig:overviewTrees}).  The extra information contained in the spanning forest representation greatly simplifies the calculation for the forward and reverse proposal probabilities (see Section~\ref{ssec:whylift}). \footnote{The methods in this work are not only method to make these calculations feasible; in particular, there is work in progress that adds self-loops in the Markov Chain of the ReCom scheme which allows one to determine the forward and backward proposal probabilities \cite{CannonConference2020}.}

\section{The Setting and Target Measure}
We now lay the groundwork we need to define the algorithm sketched in the previous section more formally.
We base our notation on that found in \cite{Bangia17,MattinglyVaughn2014} then expand upon it below. Let the graph $G$ have vertices $V$ and edges $E$. For example, each vertex may represent some sub-geographic region in a redistricting; in this context, edges are placed between vertices that are either rook, queen, or legally adjacent.\footnote{Rook adjacency means that the geographical boundary between two regions has non-zero length; queen adjacency means that the boundaries touch, but may do so at a point. At times two regions may not be geographically adjacent, but may be considered adjacent for legal purposes; for example, an island may still be considered adjacent to regions on a mainland for the purposes of making districts.}
Furthermore, in this context, we will be working with (mostly\footnote{At times, certain regions that represent a node may not be connected. If a node represents such a region, it is possible for the graph to be non-planar.}) planar graphs, however all of the ideas we will discuss are applicable to generic graphs.

We represent an $n$-partition on $G$ as a function $\xi:V \to \{1,2 \dots n\}$. Informally, $\xi(v)=i$ means $v$ is in the $i$th partition. Given a partition $\xi$, we will denote by 
$V_i(\xi) = \{v \in V \mid \xi(v) = i\}$ and $E_i(\xi) = \{(v,u) \in E \mid \xi(v) = \xi(u) = i\}$ respectively the set of vertices in the $i$th partition and the set of edges between vertices in the $i$th partition. We will define $\xi_i = (V_i(\xi), E_i(\xi))$ to be the subgraph induced by the $i$th partition.  

We will also sometimes associate extra data with the vertices and edges.  In the redistricting problem, this can include population, land area, and border length between vertices. The additional data is used to evaluate the districts on desired redistricting criteria, such as equal-population (i.e. balance between partitions) and geographical compactness. Of particular note, we define $\pop(v)$ to be the population of vertex $v$ and
\begin{align}
\pop(\xi_i) = \sum_{v\in V_i(\xi)} \pop(v).
\end{align}
to be the population of district $\xi_i$.  More generically, $\pop(\xi_i)$ may be thought of a weighted node sum on a generic partition.

In many settings, partitions must be simply connected and hence the state space is a subset of the set of $n$-partitions of the vertex set of a graph, where $n$ is the number of partitions. Using $\xi_i$ as above to represent the subgraph associated to the $i$th partition, we can recast $\xi$ as a map on vertices as the $n$ subgraphs defining the partitions. That is,
\[ \xi = \{\xi_1,\xi_2,\dots,\xi_n\}. \]
Since (up to the $n!$ equivalent labelings) there is a one-to-one correspondence between labeling functions $\xi$ and partitions into $n$ subgraphs  $\{\xi_1,\xi_2,\dots,\xi_n\}$ we will move between the two perspectives as convenient and consider $\xi$ to be both the labeling function and the partition.

We will see that in the context of Metropolizing our Forest ReCom proposal procedure it will be important to work on a space with more structure than the space of $n$-partitions. We choose to make our state space the set of $n$-\textit{tree partitions} of a graph; this is the space of forests consisting of $n$ disjoint trees whose union spans the vertices of the graph. We will use the term  \textit{spanning forest} interchangeably for such a collection of disjoint trees which span the graph. From this perspective the state space has elements of the form
\begin{align*}
 \allT = \{T_1,T_2,\cdots,T_n\},
\end{align*}
where each $T_i$ is a spanning tree on the subgraph $\xi_i$ with vertices $v_i=V_i(\xi)$ and edges $\varepsilon_i\subseteq E_i(\xi)$. The use of $n$-tree partitions of a graph rather than $n$-partitions only enlarges the state space. Thus, any distribution on the second can be represented on the first. However, we will see in Section~\ref{sec:samplingFromP} that this additional richness will allow us to build a fast and feasible algorithm for calculating proposal probabilities.
This extension is illustrated between Figure~\ref{sfig:overviewGraphs}~and~\ref{sfig:overviewTrees}. Henceforth we will consider our state to be a collection  disjoint spanning trees $\{T_i\}$ rather than a collection of disjoint graphs $\{\xi_i\}$. The one-to-one correspondence between partitions and labeling functions no longer holds, but since $\{T_i\}$ naturally induces $\{\xi_i\}$ (but not the converse), we will still consider the labeling function $\xi$ corresponding to $T=\{T_i\}$ and denote it by $\xi(T)$ or just $\xi$ if context makes the intent clear.

\subsection{The target measure on spanning forests}
We will now place the probability measure on this space spanning forest consisting of $n$ disjoint trees $\allT = \{T_1,T_2,\cdots,T_n\}$. We take our measure to be of the form
\begin{align}\label{eq:Pdef}
P(\allT) \propto e^{-\beta J(\xi(\allT))} \tau(\xi(\allT))^{-\gamma},
\end{align}
where $J$ is a score function that evaluates how ``good'' a graph partition $\xi$ is,\footnote{Lower scores are ``better'' in the sense that a partition in question performs better when considering  criteria included in the definition of $J$. Examples of criteria often incorporated in $J$ in the context of redistricting are the compactness of the districts, the deviation from the ideal population of each district and the degree to which administrative units (such as counties or cites) are fragmented.  } $\beta\in[0,1]$ and $\gamma\in[0,1]$ are tempering parameters used to change the importance of the factors $J(\xi)$ and $\tau(\xi)$ respectively, and
\begin{align}
\tau(\xi) = \prod_{i = 1}^n \tau(\xi_i),
\end{align}
where $\tau(\xi_i)$ is the total number of spanning trees on the graph $\xi_i$.\footnote{See the Supplemental Material for an efficient way to compute $\tau(G)$ for any graph $G$.} Since once the partitioning $\xi$ is fixed one can choose the spanning tree for each subgraph independently, $\tau(\xi)$ counts the total number of spanning forests on $\xi$ when viewed as a collection of disjoint graphs with the requirement that each partitioned subgraph is covered by a single spanning tree. In other words, $\tau(\xi)$ is the number of different states in our enlarged state space which correspond to the same partitions $\xi$. 

 The score function $J$ encodes a preference for maps with lower scores. It also encodes absolute constraints, which is to say maps that are strictly not allowed in the ensemble, by setting $J(\xi) = \infty$ on those maps. 
 For example, we may constrain the space if the partitions are not sufficiently balanced. 
 It is worth noting that the structure of the spanning forest $T$ does not explicitly enter the measure, as the measure only depends on the underlying partitioning, $\xi$. However, as already mentioned, we will see that considering our space to be $T$ rather than $\xi$ will be important in computing forward and reverse probabilities under Metropolis-Hastings.

 \subsection{The structure of the measure}
We now collect a number of observations about the structure of the measure $P$ and various limiting cases in $\gamma$ and $\beta$.
We will write $P(\allT; \beta = b, \gamma = g)$ for the probability of seeing the spanning forest $\allT$ in the distribution in \eqref{eq:Pdef} when $\beta=b$ and $\gamma=g$.

\subsubsection*{Uniform Measure on Spanning Forests.} When  $\gamma = 0$ and $\beta \rightarrow 0$, $P(\allT)$ converges to the uniform measure on the spanning forest of $n$ trees which satisfy the constraints described by the score function $J$; that is to say $J(\xi) < \infty$. If we were to use the convention that $0\times\infty=0$ in the exponent, when $\gamma=\beta=0$ the measure becomes
\begin{align}
P(\allT; \beta = 0, \gamma = 0) \propto 1,
\end{align}
which is to say we recover the uniform measure on the spanning forests, subject to no constraints. 

\subsubsection*{Uniform on All Graph Partitions} When $\gamma=1$, the distribution on graph partitions depends only on the factor involving $J$; that is, the probability of finding partition $\xi$ no longer depends on $\tau(\xi)$. To see this, note that
\begin{align}
P(\xi) &\propto \sum_{\allT \in ST(\xi)} P(\allT) = e^{-\beta J(\xi)} \frac{\tree(\xi)}{\tree(\xi)^\gamma},
\label{eqn:propxi}
\end{align}
where 
\begin{align}
ST(\xi) = ST(\xi_1)\times\cdots\times ST(\xi_n),
\end{align}
is the cartesian product of all spanning trees, $ST(\xi_i)$, of subgraph $\xi_i$. 

When $\gamma = 1$ and $\beta\rightarrow 0$, the measure becomes uniform on graph partitions subject to the absolute constraints given by the score function $J$. When $\gamma = 1$ and $\beta=0$ (as before, using the convention that $0\times\infty=0$), the measure is uniform on all graph partitions.

\subsubsection*{Intermediate Values of $\gamma$} As can be seen in equation~\eqref{eqn:propxi}, as $\gamma$ becomes smaller, we favor partitions that have a larger product of tree counts on the subgraphs.
In particular, when $\gamma = 0$ the chance of finding a partition, $\xi$, is proportional to the product of the number of spanning trees on the subgraphs in $\xi$.

For moderately sized graphs with a few hundred or thousand vertices, the number of spanning trees is extremely large. In fact, this number grows faster than exponentially with the number of vertices in the graph, assuming the graph has average degree larger than 2 \cite{greenhillAverageNumberSpanning2017}. This rapid growth may cause large disparities between the relative probabilities of different partitions, $\xi$, as this ratio will be proportional to the product of spanning tree ratios
\begin{align}
\frac{P(\allD)}{P(\allD')} \propto \frac{\tree(\allD)}{\tree(\allD')} \frac{\tree(\allD')^\gamma}{\tree(\allD)^\gamma} e^{-\beta[ J(\xi)- J(\xi')]}.
\label{eqn:treeratio}
\end{align}
When taking a random walk through the state space using the Metropolis-Hastings algorithm, proposed states will usually have either far fewer or far more trees than the prior state, and the acceptance probability will be dominated by this ratio. This issue may be alleviated by modifying $\gamma$ in the interval $[0,1]$. There is a trade-off: choosing $\gamma$ close to 0 leads to more similar probabilities and thus potentially better movement around the state space, but choosing $\gamma$ close to 1 leads to a distribution which is closer to uniform on the graph partitions rather than spanning forests.

\subsubsection*{Induced Measure on Partitions} 

It is instructive to pause and consider the relative structure of the measures on spanning forests and partitions. The following lemma shows that all spanning forests which correspond to a given partition are equally likely to be sampled. In other words, the measure conditioned on a given partition is uniform on the spanning trees which correspond to that partition.

\begin{lemma}
  If two spanning forests $T$ and $T'$ represent the same partition then their corresponding states have equal probability under the measure $P$. Given our notation, we may write that if $\xi(\allT)=\xi(\allT')$, then
  $P(\allT)=P(\allT')$.
\end{lemma}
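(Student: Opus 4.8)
The plan is to unwind the definition of $P$ in \eqref{eq:Pdef} and observe that the right-hand side depends on the state $\allT$ \emph{only} through the induced partition $\xi(\allT)$. Concretely, $P(\allT) \propto e^{-\beta J(\xi(\allT))} \tau(\xi(\allT))^{-\gamma}$, and every factor on the right is a function of $\xi(\allT)$ alone: the score $J$ is defined on partitions, and $\tau(\xi) = \prod_{i=1}^n \tau(\xi_i)$ is the product of spanning-tree counts of the induced subgraphs, which again is determined by $\xi$. The structure of the trees $T_i$ themselves never enters the formula.

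The key steps, in order, are: (i) assume $\xi(\allT) = \xi(\allT')$, call this common partition $\xi$; (ii) substitute into \eqref{eq:Pdef} to get $P(\allT) \propto e^{-\beta J(\xi)} \tau(\xi)^{-\gamma}$ and likewise $P(\allT') \propto e^{-\beta J(\xi)} \tau(\xi)^{-\gamma}$ with the \emph{same} normalizing constant, since the constant of proportionality is the fixed partition function $Z = \sum_{\allT} e^{-\beta J(\xi(\allT))} \tau(\xi(\allT))^{-\gamma}$ over the whole state space; (iii) conclude $P(\allT) = P(\allT')$. One should also note for completeness that $\allT$ and $\allT'$ being valid states with the same induced partition makes the comparison meaningful — both lie in the support precisely when $J(\xi) < \infty$, and if $J(\xi) = \infty$ both probabilities are $0$.

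Honestly, there is no real obstacle here; the lemma is essentially a restatement of the fact that $P$ was \emph{defined} as a pullback of a measure on partitions along the map $\allT \mapsto \xi(\allT)$. The only thing worth being careful about is bookkeeping with the proportionality symbol: I would make explicit that "$\propto$" hides a single global normalizing constant $Z$ that is the same for $\allT$ and $\allT'$, so that equality of the unnormalized weights immediately gives equality of the probabilities. If one wanted to phrase it without invoking $Z$, the cleanest route is to observe $P(\allT)/P(\allT') = \big(e^{-\beta J(\xi(\allT))} \tau(\xi(\allT))^{-\gamma}\big)\big/\big(e^{-\beta J(\xi(\allT'))} \tau(\xi(\allT'))^{-\gamma}\big) = 1$ using \eqref{eqn:treeratio} with $\xi = \xi'$, which sidesteps normalization entirely.
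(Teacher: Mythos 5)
Your proof is correct and follows essentially the same route as the paper's: the unnormalized weight in \eqref{eq:Pdef} depends on $\allT$ only through $\xi(\allT)$, so equal partitions give equal probabilities. The extra bookkeeping about the shared normalizing constant and the $J(\xi)=\infty$ case is fine but not needed beyond the paper's one-line argument.
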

\begin{proof}
  This follows from the fact that for a given spanning forest $T$, both the score function $J$ and the number of spanning trees $\tau$ only depend on the partition $\xi(T)$. Since these are the only occurrences of $T$ in the definition of $P$, the result follows.
\end{proof}

\section{Sampling From The Measure \texorpdfstring{$P$}{P}}
\label{sec:samplingFromP}

As already discussed, we will use a global merge and split algorithm to propose moves to the standard Metropolis-Hastings algorithm. Our Forest ReCom algorithm is not itself reversible, but the resulting Markov chain given by the Metropolis-Hastings algorithm, with Forest ReCom as a proposal, will be. Additionally, by Metropolizing, we are able to sample from a wide range of target measures. Although one can always in theory use Metropolis-Hastings to create a reversible chain from any proposal method able to sample from a specified target measure, Metropolis-Hastings will fail in practice if the rejection probabilities are too large or if calculating the necessary transition probabilities is computationally infeasible.
Previously similar algorithms described in \cite{moonVa, deford2019redistricting,DeFord2018}, failed to create reversible chains; 
ours does because it  manages to efficiently compute the forward and backwards proposal probabilities due to the extension of the state space.

In the next section, we review the Metropolis-Hastings algorithm in this setting. In Section~\ref{sec:MS}, we 
give a full description of our Metropolized Forest ReCom algorithm and many of the implementation details. We also explain what is gained computationally by working on the space of spanning forests rather than the space of partitions.

\subsection{Metropolis-Hastings algorithm}

To sample from the measure $P$ defined previously, we use the Metropolis-Hastings algorithm with our Forest ReCom  algorithm as the proposal method.
We will denote by $Q(T,T')$ the probability of starting from the spanning forest $T$ and proposing spanning forest $T'$ using the Forest ReCom algorithm. In other words, if the current state of chain is the spanning forest $T$, the measure $Q(T,\;\cdot\;)$ is the distribution of the next proposed move of the chain. Then, following the Metropolis-Hastings prescription, this move is accepted with a probability $A(T,T')$ defined by 
\begin{align}\label  {eq:A}
A(T, T') = \min\left(1, \frac{P(T')}{P(T)}\frac{Q(T', T)}{Q(T, T')}\right)=\min\left(1, e^{-\beta[ J(\xi')- J(\xi)]} \left[\frac{\tree(\xi)}{\tree(\xi')}\right]^\gamma  \frac{Q(T', T)}{Q(T, T')}\right),
\end{align}
and rejected with probability $1-A(T,T')$. If the step is accepted, the next state is the proposed state; if the step is rejected, the next state does not change.
One can prove under relatively mild considerations that this process will converge to sampling from the measure $P$ if it is run for sufficiently many steps.

\subsection{The Forest ReCom Proposal}\label{sec:MS}
We now describe the Forest ReCom proposal Markov Chain $Q$ introduced in previous section. As already mentioned, this
  algorithm is specifically designed to have both forward and backward transition probabilities which can be efficiently computed. From \eqref{eq:A}, we see that this is critical if it is to be used in the Metropolis-Hastings algorithm as a proposal.

\label{sec:mergeSplitQ}
The proposal step shares many features with the ReCom algorithm \cite{deford2019recombination}.  The differences come in because we are evolving spanning forests rather than graph partitions. We provide a brief outline of this proposal, assuming that the current state of the chain is the spanning forest $T$. Our goal is to produce a new spanning forest, $T'$, which corresponds to merging two adjacent spanning trees in $T$, then redividing the merged tree into two new spanning trees which satisfy constraints given by $J$, and  then calculate $Q(T,T')$ and $Q(T',T)$. In this sub-section, we provide a high-level outline of this procedure.

Given spanning forest $T=(T_1,\cdots,T_n)$, we
\begin{enumerate}
    \item Choose two trees $T_i$ and $T_j$ from $T$ which correspond to adjacent partitions.
    \item Draw a new spanning tree $T_{ij}'$ uniformly at random on the subgraph $\xi_{i,j}$ induced by the union of vertices in $T_i$ and $T_j$ and the edges connecting these vertices. In other words, this induced graph is $\xi_{ij}= (V_{ij}, E_{ij})$ where $V_{ij}(\xi) = \{v \in V \mid \xi(v) \in  \{i,j\} \}$ and $E_{ij}(\xi) = \{(v,u) \in E \mid \xi(v), \xi(u) \in \{i,j\} \}$.
    \item Determine the edges of the newly-drawn tree $T_{ij}'$ such that, once removed, they would split the spanning tree into two trees that each comply with some subset of the constraints. If there are no such edges, remain in the current state and the proposal is finished; otherwise continue.
    \item Select one such edge and remove it from the new spanning tree $T_{ij}'$, leaving two new trees $T_i'$ and $T_j'$.
    \end{enumerate}
    
    In detailing the implementation of these steps, there are a number of ways that the first step may be carried out. For example, we might chose uniformly from all pairs of adjacent partitions; alternatively we can weight the choice by some property of the shared boundary between partitions such as the sum of edge weights; or we can account for the the values of the score function $J$. 

    The second step is achieved by Wilson's algorithm which employs loop-erased random walks. For the third step, the most pertinent constraint is equal population, so the third step involves a simple depth-first search along the tree with exit criteria based on the remaining population within a search branch. Choosing the specific edge to cut in the fourth step may be done uniformly or with a weighted distribution that might, for example, favor more equal populations.


    To make the algorithm reversible through Metropolis-Hastings, we must compute the probability of proposing state $T'$ from state $T$, along with computing the reverse probability of proposing state $T$ from state $T'$.
    Once the pair of spanning trees  $T_i$ and $T_j$, to be are merged and then split, is chosen, the remainder of the algorithm is summarized by the set of mappings show in equation \eqref{eq:algFlow}. They summarize steps 2--4 above. The annotations will help to explain why the choice of a forest of spanning trees as states space and the particular structure of \eqref{eq:algFlow} is important for calculating the probability of forward and backward probabilities.
    \begin{align}
      \label{eq:algFlow}
      \{T_i, T_j\} \xrightarrow[\text{deterministic}]{\text{many-to-one}}  \xi_{ij} \xrightarrow[\text{random}]{\text{one-to-many}}  T_{ij}' \xrightarrow[\text{random}]{\text{one-to-a-few}}   \{T_i', T_j'\} 
    \end{align}
There are some indications in the above mappings that suggest calculating the forward and backward probabilities should be tractable. The initial mapping is deterministic and all the random choices to come only depend on $\xi_{ij}$. Though the next step is a one-to-many random map, we will choose it to be uniform on a set whose size we can calculate, and, from which, we can draw uniformly. Since the next map is onto a relatively small set, it will be possible   calculate the forward probabilities and produce a random draw. Because in this step the forward possibilities are limited, we will see that identifying the backward possibilities will also be tractable. 

\subsection*{Calculating the forward and backward proposal probabilities, \texorpdfstring{$Q(T,T')$}{Q(T,T')} and \texorpdfstring{$Q(T',T)$}{Q(T',T)}}
We begin by calculating the chance that the merging of $\{T_i, T_j\}$ and subsequent splitting produces the replacement spanning trees $\{T_i', T_j'\}$.  To do this, we must examine all possible spanning trees on the induced graph of $\xi_{ij}$ that could have been drawn in step 2 and then cut in step 4 to result in $T_i'$ and $T_j'$. We must then sum the probability that we found $T_i'$ and $T_j'$ across all such choices to compute the probability of proposing the new state.

The set of all possible spanning trees that could result in $T_i'$ and $T_j'$ is simply the trees defined by $T_i'\cup T_j'$ along with each edge in $G$ connecting the two graphs (see Figure~\ref{sfig:calcPropEdges}). The edges in $G$ that connect the spanning trees are $\{(v,u)\in E \mid \xi'(u) = i, \xi'(v) = j \}$ and we will denote this set as $E(T_i', T_j')$. 
Together with the new trees $T_i'$ and $T_j'$, each edge $e\in E(T_i', T_j')$ induces a spanning tree on the induced graph $\xi_{ij}$, which we will denote $T_{(T_i', T_j', e)}$. This new spanning tree is one of the trees that could have been drawn in step 2 and cut in step 4 to yield $T_i'$ and $T_j'$.

Let the probability from step 4 that we cut the tree $T_{(T_i', T_j', e)}$ at edge $e'$ be $P_{cut}(e' \mid T_{(T_i', T_j', e)})$. Note that the probability that we cut $T_{(T_i', T_j', e)}$ into $T_i'$ and $T_j'$ is $P_{cut}(e \mid T_{(T_i', T_j', e)})$ (see Figure~\ref{sfig:calcPropEdgeCuts}). Finally, note that the probability of drawing each of the spanning trees induced by some edge in $E(T_i', T_j')$ is simply $1/\tau(\xi_{ij})$.

Putting this all together, we now find that the probability of the proposing $T_i'$ and $T_j'$ from $T_i$ and $T_j$ is
\begin{align}
q(\{T_i, T_j\}, \{T_i', T_j'\}) = \frac{1}{\tau(\xi_{ij})} \sum_{e\in E(T_i', T_j')} P_{cut}(e \mid T_{(T_i', T_j', e)}).
\label{eqn:twoDistProposal}
\end{align}

This calculation assumes that we have already chosen $T_i$ and $T_j$. Given the spanning forest $T$ let us denote by $p(\{i,j\} \mid \allT)$ the probability 
from step one of picking the pair of adjacent spanning trees $T_i$ and $T_j$ to merge. This probability is simple to calculate for most reasonable choices of how to perform step 1. Then the probability of proposing state $T'$ from state $T$ is given as
\begin{align}\label{eq:MergeSplitQ}
Q(\allT, \allT') = p(\{i,j\} \mid \allT) q(\{T_i, T_j\}, \{T_i', T_j'\}).
\end{align}
where $q(\{T_i, T_j\}, \{T_i', T_j'\})$ is the transition calculated in \eqref{eqn:twoDistProposal}.

\begin{figure}
\centering
\subcaptionbox{Edges $E(T_i',T_j')$ (dashed) that complete a merged tree \label{sfig:calcPropEdges}}{\includegraphics[width=0.4\linewidth, clip = true, trim = {4cm 5cm 3cm 4cm}]{./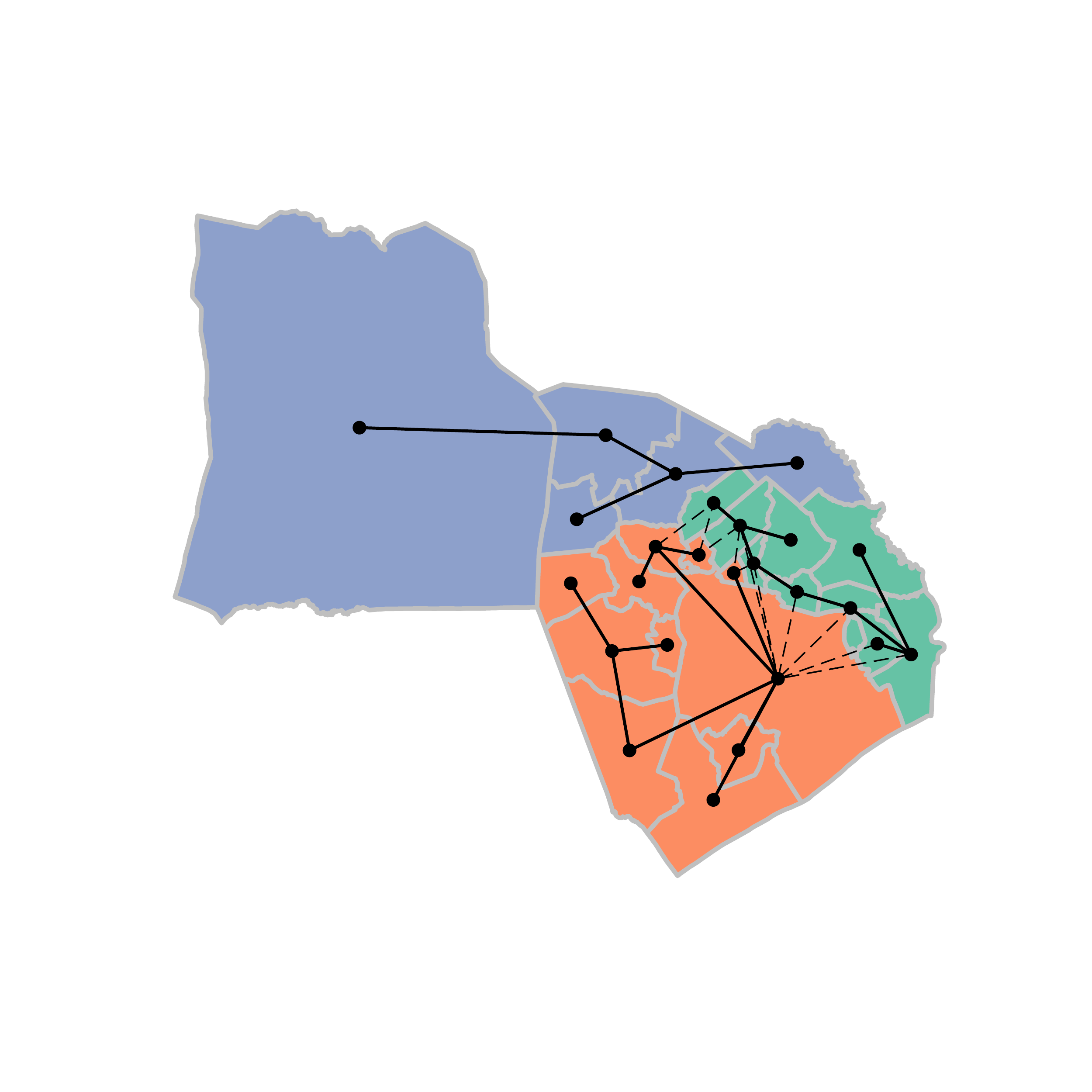}}\qquad
\subcaptionbox{After selecting one edge from (A), we highlight all edges that could have been cut to provide a new balanced partition \label{sfig:calcPropEdgeCuts}}{\includegraphics[width=0.4\linewidth, clip = true, trim = {4cm 5cm 3cm 4cm}]{./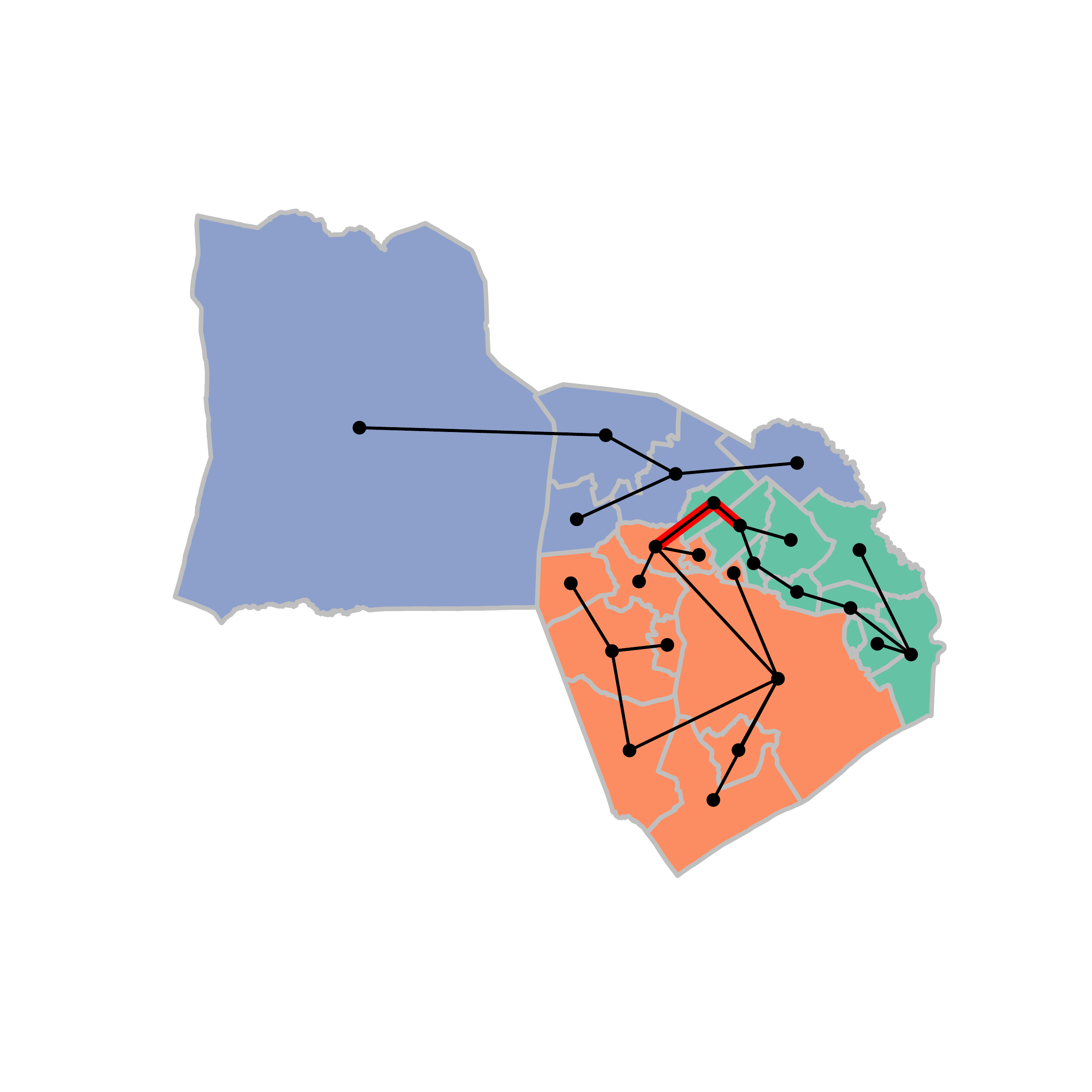}}\
\caption{
When calculating the proposal probability of drawing $T_i'$ (orange) and $T_j'$ (green), we must examine all edges that could have made a spanning tree on the joined space (A). For each edge we must examine all edges that could have been cut and then compute the probability that we cut the edge that leads to the observed partition. In (B) we show the choice of one of the conflicted edges in $e\in E(T_i', T_j')$ and mark it along with a second edge in red; both of these edges could have been cut with probability $P_{cut}(e'\mid T_{(T_i', T_j',e)})$, where $e'$ is either $e$ or the second edge highlighted in red. 
}
\label{fig:calcProp}
\end{figure}

\subsection*{Properties of the Forest ReCom Proposal \texorpdfstring{$Q$}{Q}} 
First observe that when the Forest ReCom proposal from \eqref{eq:MergeSplitQ} is inserted into the formula for the acceptance probability from \eqref{eq:A}, we obtain
 \begin{align}\label{AQ}
    A(T, T') = \min\left(1, e^{-\beta(J(\xi')-J(\xi))}\left[\frac{\tau(\xi)}{\tau(\xi')}\right]^\gamma\frac{p(\{i,j\} \mid \allT')} {p(\{i,j\} \mid \allT)}\frac{q(\{T_i', T_j'\}, \{T_i, T_j\})}{q(\{T_i, T_j\}, \{T_i', T_j'\})}\right). 
 \end{align}
From this we see that if $\gamma=0$, one does not need to calculate the $\tau(\xi)$ and $\tau(\xi')$ factors, which reduces the computational costs. Furthermore, as mentioned before, the difference between $\tau(\xi)$ and $\tau(\xi')$ is a principle reason for a low acceptance rate when $\gamma$ is closer to 1.
 
Inserting \eqref{eqn:twoDistProposal} into \eqref{AQ},  we see that the ratios of the Forest ReCom proposal probabilities can be written as
 \begin{align}
   \label{eq:qq}
   \frac{q(\{T_i', T_j'\}, \{T_i, T_j\})}{q(\{T_i, T_j\}, \{T_i', T_j'\})}= \frac{\sum_{e\in E(T_i, T_j)} P_{cut}(e \mid T_{(T_i, T_j, e)})}{\sum_{e\in E(T_i', T_j')} P_{cut}(e \mid T_{(T_i', T_j', e)})}\,.
 \end{align}
 In particular, we see that the $\tau(\xi_{ij})$ factors in each $q$ expression cancel and hence need not be computed.

To better under the structure of  this ratio, notice that when there is only a single edge that could possibly be cut for any spanning tree induced by the edges $E(T_i, T_j)$, we may write the proposal ratio as
 \begin{align}
   \frac{q(\{T_i', T_j'\}, \{T_i, T_j\})}{q(\{T_i, T_j\}, \{T_i', T_j'\})}= \frac{|E_{ij}(T_i, T_j)|}{|E_{ij}(T_i', T_j')|}\,.
 \end{align}
Hence in this case we see that the ratio of probabilities is equal to the ratio of the number of edges which connect the two trees, or, in other words, the graph-theoretic length of the boundary. With this calculation in mind it is reasonable to define an effective boundary between $T_i$ and $T_j$ by
\begin{align}
  \partial(T_i,T_j) = \sum_{e\in E(T_i, T_j)} P_{cut}(e | T_{(T_i, T_j, e)})\,.
\end{align}
Returning to general $\gamma$ and with this notation, the acceptance ratio becomes 
\begin{align}\label{eq:Asimplified}
  A(T, T') = \min\left( 1, e^{-\beta [ J(\xi') - J(\xi)] } \frac{p(\{i,j\}\mid T')}{p(\{i,j\}\mid T)} \frac{\partial(T_i,T_j)}{\partial(T_i',T_j')}\left[\frac{\tree(T_i) \tree(T_j)}{\tree(T_i') \tree(T_j')} \right]^\gamma\right),
\end{align}
where we have used the fact that the spanning forests $T$ and $T'$ only differ in the $i$th and $j$th trees so
\begin{align*}
  \frac{\tree(T)}{\tree(T')}= \frac{\tree(T_i)  \tree(T_j)}{\tree(T_i')  \tree(T_j')}.
\end{align*}

As mentioned previously, the ratio between spanning tree products, $\tree$, may be large between partitions, $\xi$ and $\xi'$. This disparity is eliminated when $\gamma = 0$, however setting $\gamma = 0$ favors sampling partitions with higher values of $\tree$. The parameter $\gamma$ is presented as a smoothly varying parameter because it demonstrates how one may use a tempering (e.g. simulated or parallel tempering) scheme to vary $\gamma$ from 0 to 1 across multiple chains in an extended product measure. 
We have chosen a form of the measure which essentially depends only on the partition $\xi(T)$ induced by the spanning forest $T$. 
Additionally, when $\gamma = 0$ there is no need to compute the number of spanning trees on the new partitions, as the products $\tree$ do not explicitly appear in the measure nor the proposal ratio. 

\subsection{Why lift from partitions to tree partitions?}
\label{ssec:whylift} 

We remark now on why it was useful to expand our state space on the space of all spanning forests.
Consider the merge split algorithm where given a partition $\xi=\{\xi_1,\cdots,\xi_n\}$, we produce a new partition $\xi'=\{\xi_1',\cdots,\xi_n'\}$. $\xi'$ is the same as $\xi$ except that two adjacent elements of the partition have been merged and then split in two to create two new elements. This is essentially the Forest ReCom proposal described in Section~\ref{sec:mergeSplitQ} and given by the probability distribution $Q$. Our algorithm begins by erasing the initial spanning trees on the two partitions chosen, merging the vertices, then drawing a new spanning tree on the induced graph. We could, however, also view the algorithm as moving between pairs of partition elements in which we draw a spanning tree on the merged induced graph and precede exactly as in Section~\ref{sec:mergeSplitQ} only without knowledge of the extended state space. Hence the two perspectives only differ in that one takes a spanning forest and transitions to a new spanning forest, whereas the second perspective takes a partition and transitions to a new partition.

The difference comes when one tries to calculate the probability $Q(\xi,\xi')$. We have already seen that in the spanning forest space $Q(T,T')$ is tractable. However, we now explain why, to the best of our understanding, calculating $Q(\xi,\xi')$ is much more difficult.

Looking back at \eqref{eqn:twoDistProposal}, we begin by remarking that the transition kernel 
$q(\{T_i, T_j\}, \{T_i', T_j'\})$ only depends on $\{T_i, T_j\}$ through the union of their vertex set which we have denoted $\xi_{ij}$. Hence we can equally view $q$ as a transition from the partitions $\{\xi_i, \xi_j\}$ to the spanning trees to the  $\{T_i', T_j'\}$ denoted by $q(\{\xi_i.\xi_k\}, \{T_i', T_j'\})$. This is simply because both $\{\xi_i,\xi_k\}$ and $\{T_i, T_j\}$ determine $\xi_{ij}$  and hence can be use as input to calculate the probability. (This last fact is evident from \eqref{eq:algFlow} and from \eqref{eq:algFlowPartitions} given below.
With this observation, we have that  
\begin{align}
  Q(\{\xi_i, \xi_j\}, \{\xi_i', \xi_j'\}) &= \frac{1}{\tau(\xi_{ij})}\sum_{T_i' \in ST(\xi_i')} \sum_{T_j' \in ST(\xi_j')}\sum_{e\in E(T_i', T_j')} P_{cut}(e \mid T_{(T_i', T_j', e)})\notag \\
  &= \sum_{T_i' \in ST(\xi_i')} \sum_{T_j' \in ST(\xi_j')} q(\{\xi_i.\xi_j\}, \{T_i', T_j'\}) \label{eq:Qxi}
\end{align}
where we have used the same notation as in Section~\ref{sec:mergeSplitQ}.
We now see that if we want to calculate $Q(\{\xi_i, \xi_j\}, \{\xi_i', \xi_j'\})$ on needs to calculate  $|ST(\xi_i')|\times |ST(\xi_j')|$ transition probabilities of the form $q(\{\xi_i.\xi_j\}, \{T_i', T_j'\})$ instead of the one needed for $q(\{T_i, T_j\}, \{T_i', T_j'\})$. This saves considerable computational effort.

Another perspective is that computing $Q(\{\xi_i, \xi_j\}, \{\xi_i', \xi_j'\})$ involves the one-to-many map of taking a partition into all possible spanning trees followed by the many-to-one map of erasing the spanning tree to obtain the partition. This is expensive. In  $Q(\{T_i, T_j\}, \{T_i', T_j'\})$ the order is reversed. The many-to-one operation of erasing the trees to arrive at a partition comes first followed by the one-to-many operation of drawing the trees. This computation is relatively easy.
This is summarized in diagram in equation \eqref{eq:algFlowPartitions} which should be compared with  equation \eqref{eq:algFlow}.
\begin{align}
      \label{eq:algFlowPartitions}
      \{\xi_i, \xi_j\} \xrightarrow[\text{deterministic}]{\text{many-to-one}}  \xi_{ij} \xrightarrow[\text{random}]{\text{one-to-many}}  T_{ij}' \xrightarrow[\text{random}]{\text{one-to-a-few}}   \{T_i', T_j'\} \xrightarrow[\text{random}]{\text{many-to-one}}   \{\xi_i', \xi_j'\} 
    \end{align}
When comparing with  \eqref{eq:algFlow}, we see that the first maps are essentially equivalent basically encapsulating the already mentioned fact that all of the random choices only depend on the initial state though $\xi_{ij}$. The next two random maps are the same as for the algorithm on spanning trees; and hence, are relatively easily to draw from and compute the backwards and forwards probabilities. The complication comes from the last mapping. It is many-to-one. This means that there are many tree pairs  $\{T_i', T_j'\}$ in the pre-image of a single partition pair  $\{\xi_i', \xi_j'\}$. This is represented by the two outer sums in \eqref{eq:Qxi} which combined are over $|ST(\xi_i')|\times |ST(\xi_j')|$ terms.

\begin{remark}
Up until this point, we have insisted that the state space be defined as a spanning forest. However, it is possible to use Metropolized Forest ReCom to sample from the space of partitions while preserving most of the computational improvements.  

To see this, observe that each spanning tree within each partition is uniform conditioned on that partition, thus conditioned on each partition, each tree in the partition is equally likely. Due to this, we can project the state space down the the space of partitions without losing critical information. We then modify the algorithm as follows: When we merge and split on adjacent partitions we can draw new random trees on each partition. There is a small amount of wasted effort on redrawing the trees within current partitions; Metropolized Forest ReCom does not waste this effort since it holds the trees within the state.  These trees are then used when calculating the reverse probability.  When we either accept or reject, the old or new trees can be deleted and we can project back down to the space of partitions.
\end{remark}
     
\section{Implementation of Forest ReCom Proposal \texorpdfstring{$Q$}{Q}} 
\label{sec:implementation}

We detail the implementations of steps within the proposal. 
There are many choices when picking the adjacent pair probability pair $P(\{i,j\}|T)$ -- one may uniformly choose amongst adjacent partitions or weight the choice by the sum of edge weights (e.g. shared border length for redistricting), the shared number of conflicted edges, or some heuristic of the acceptance probability. In the current work we make this choice in two different ways. In one method, we uniformly pick a random district and then uniformly pick a random district neighbor, so that 
\begin{align}
P(\{i,j\}|T) &= P(\{i,j\}|\xi(T))\nonumber\\
& = P(i | \xi) P(j | i, \xi) + P(j | \xi) P(i | j, \xi)\nonumber\\
& = \frac{1}{n}\left(\frac{1}{N_i(\xi)} + \frac{1}{N_j(\xi)}\right),
\label{eq:adjdist}
\end{align}
where $n$ is the number of partitions, and $N_i(\xi(T))$ is the number of partitions neighboring partition $\xi_i$. In the second method, we randomly select an edge that is not confined to a single partition, meaning that we preferentially choose partition pairs with more shared edges. Formally,
\begin{align}
P(\{i,j\}|T) = P(\{i,j\}|\xi(T)) = \frac{\partial_e (\xi_i, \xi_j)}{\displaystyle{\sum_{\ell<k}} \partial_e (\xi_\ell, \xi_k)},
\label{eq:bdrydist}
\end{align}
where $\partial_e (\xi_\ell, \xi_k)$ is the number of edges on the graph that have one node in partition $\xi_\ell$ and another in partition $\xi_k$. We denote the former method as ``uniform partition-uniform neighbor'' and the latter as ``boundary weighted.''

As already mentioned, uniform spanning trees will be drawn using Wilson's algorithm. There are several implementations of this algorithm (for details detail see the Supplemental Material). If $\gamma \neq 0$ we must compute the number of spanning trees on each subgraph induced by $\xi_i$; this is accomplished via Kirchoff's theorem and is, algorithmically, the slowest step of the algorithm (see the Supplemental Material).

When choosing what edge to cut, we must specify the probability of cutting edge $e$ given tree $T_{ij}$, $P_{cut}(e | T_{ij})$. Perhaps the simplest implementation is to uniformly choose an edge from the set of edges such that the the cut leads to two trees with populations within the constraints set out by $J$. Let $E_c(T)$ denote the edges, such that if a single edge was removed from $T$, the remaining two trees would have population within the specified constraints. Then \begin{align}
  P_{cut}(e | T) = 
  \begin{cases} 
    |E_c(T)|^{-1} & e \in E_c(T),\\
    0 & \text{ otherwise.}
  \end{cases}
\end{align}
We adopt this approach in the present work. We detail how we determine $E_c(T)$ both for the initial cut, and when computing the proposal probability given in equation~\eqref{eqn:twoDistProposal} in the Supplemental Material.

\section{Results}

\subsection{The Duplin-Onslow County Cluster} We implement and test the merge split algorithm by considering the graph generated by a ``county-cluster'' for the state House in North Carolina made up of Duplin and Onslow counties and comprising three state house legislative districts. In this case, edges of the graph are determined based on geographic adjacency and `district' will be used interchangeably with a partition, $\xi_i$ and a `districting plan' will be used interchangeably with a partition $\xi$. 

The two county, three district problem we are considering is a realistic one: In North Carolina, state legislative districts are first divided into county-clusters comprising a certain number of districts; each county-cluster forms a unique and independent region to redistrict.\footnote{See \cite{carter2019optimal} for more discussions and background on county clusterings in redistricting North Carolina}  The Duplin-Onslow cluster is comprised of three state House legislative districts. In constructing the graph to be partitioned, we make one more problem specific consideration which is that Onslow county has fewer people than a whole district and therefore it must be kept in tact. The Duplin-Onslow cluster is the example we have shown in Figures~\ref{fig:overview}~and~\ref{fig:calcProp}. We show the cluster within the state of North Carolina in Figure~\ref{fig:DOinNC}. 

\begin{figure}
\centering
\includegraphics[width = 0.8\linewidth, clip = true, trim = 5cm 8cm 5cm 7cm]{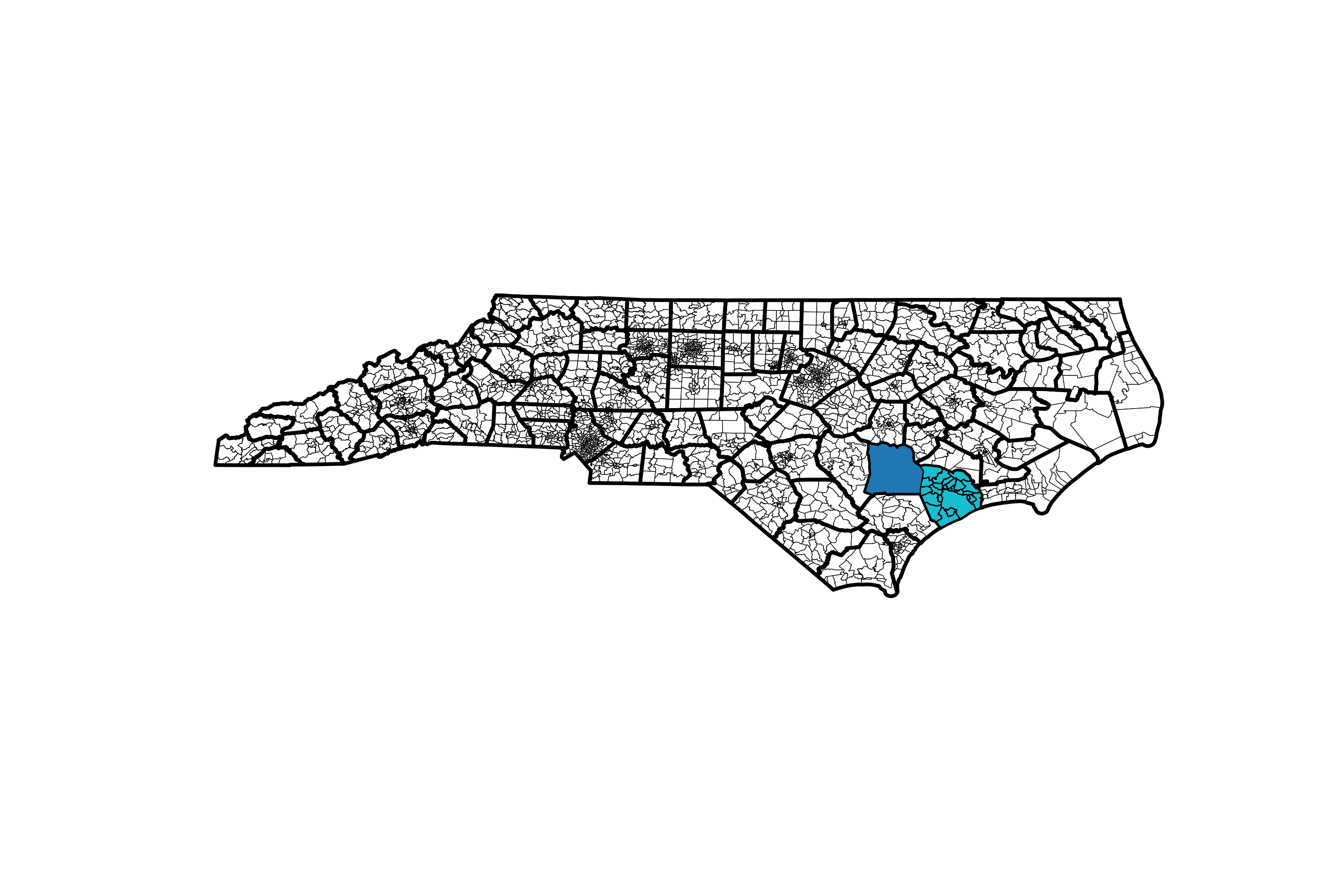}
\caption{We show a precinct map (thin black lines) with the counties (thick black lines of North Carolina. We contextualize the Duplin-Onslow county cluster with shades of blue. Duplin (dark blue; the more western county) is shown as a full county because it will not be split within the cluster; Onslow (light blue; the more eastern county) is shown with its precincts.}
\label{fig:DOinNC}
\end{figure}
The Duplin-Onslow county-cluster graph is made up of 25 nodes and 60 edges. The nodes represent the precincts used when redrawing the 2016 maps. Legally, all state House districts in North Carolina must have populations no more than a 5\% deviation from the target ideal population of a state House district. The target population is found by taking the total population of North Carolina in 2010 and dividing it by the number of statewide districts -- 120. Because of its small size, it is possible to enumerate all possible plans within this cluster that are within 5\% of the target population. We will use the 17,653 enumerated plans as an analytic bench mark in sampling with our Metropolized Forest ReCom algorithm.\footnote{The enumerated plans come through private communication with Colin Rundle; this work builds on \cite{Kawahara2017}}

As a proof of concept, we run two batches of ten independent chains. The two batches are for the two methods of choosing adjacent district pairs to merge and split (see \eqref{eq:adjdist} and \eqref{eq:bdrydist} in Section~\ref{sec:implementation}). Within each batch, each of the chains begins with a unique initial condition and runs for one million steps with $\gamma = 1$ and $J$ only accounting for the population constraints (which is constrained via the proposal).\footnote{The code we used to run these chains is available at https://git.math.duke.edu/gitlab/gjh/mergesplitcodebase.git}

The unique initial state of each chain is found in a recursive process. First, a single (uniformly random) spanning tree is drawn on the entire graph.  We then look for edges to remove such that one side of the split would be within acceptable population bounds of a single district, and the other would be within acceptable population bounds for the remaining $n-1$ districts.  The first part of the graph is designated as a district; we then draw a new spanning tree on the remaining part of the graph and repeat the above procedure.  If we do not find an acceptable edge to remove, we may either redraw the spanning tree on the remaining graph and try again (up to some maximal number of attempts) or simply start over.

With the samples collected from the chains, each starting from one of the unique initial conditions, we consider two sets of observables on this space, which are related to evaluating partisan gerrymandering. The observables are derived by selecting a set of historic vote counts and simulating state house elections by assuming the votes are fixed under changing districts. The first observable we consider is the number of elected Democrats and Republicans (between 0 and 3) within each map under a given (fixed) set of votes in each precinct. The second set of observables is the relative fraction of Democratic and Republican votes within each district; this second statistic measures not just the party representative that would win but encodes the margin of victory. We may use any historic votes to measure the above observables; in practice, we typically use many sets of votes in order to probe how the results change under different voting patterns. In the current work, we wish to test that the distribution of the MCMC chain converges to the distribution under the enumerated results. We therefore choose a historic election that would lead to a variety of elected officials from each party and find that the 2008 Governor's race is suitable for this purpose.

For the purpose of validating our algorithm we note that we are less concerned about \emph{which} low dimensional statistic we evaluate than we are with taking some relevant observable of interest.  Alternatively, we could have examined the distribution of some measure on compactness across the three districts or have taken some other relevant measure on the space.

In our experiments, we find that the methods of selecting adjacent districts lead to similar results.  Under the ``uniform district-uniform neighbor'' method (see \eqref{eq:adjdist}), an average of $185{,}884.1$ proposals per chain are accepted over the ten chains of $10^6$ steps. Under the ``boundary weighted'' method (see \eqref{eq:bdrydist}), an average of $52{,}537.2$ proposals per chain are accepted over the ten chains of $10^6$ steps.
For each batch of ten chains we plot the sampled distribution of Democrats elected in Figure~\ref{fig:HistDupOnsGv08} (left). We also estimate the rate of convergence by examining the average total variation between each of the ten chains and the enumerated ensemble. We take a best fit power law over both methods and find an order of convergence of  to be $0.32$ for the ``uniform district-uniform neighbor'' method and $0.37$ for the ``boundary weighted'' method (see right side of Figure~\ref{fig:HistDupOnsGv08}). 
At the final step, we find that the total variation between the ensemble of aggregated chains and the enumerated results has reduced to 0.0155 for the uniform district-uniform neighbor method and 0.0154 for the boundary weighted method. In both cases, the exact and sampled distributions are qualitatively very close after one million proposals.

\begin{figure}
\centering
\includegraphics[width=0.4 \linewidth]{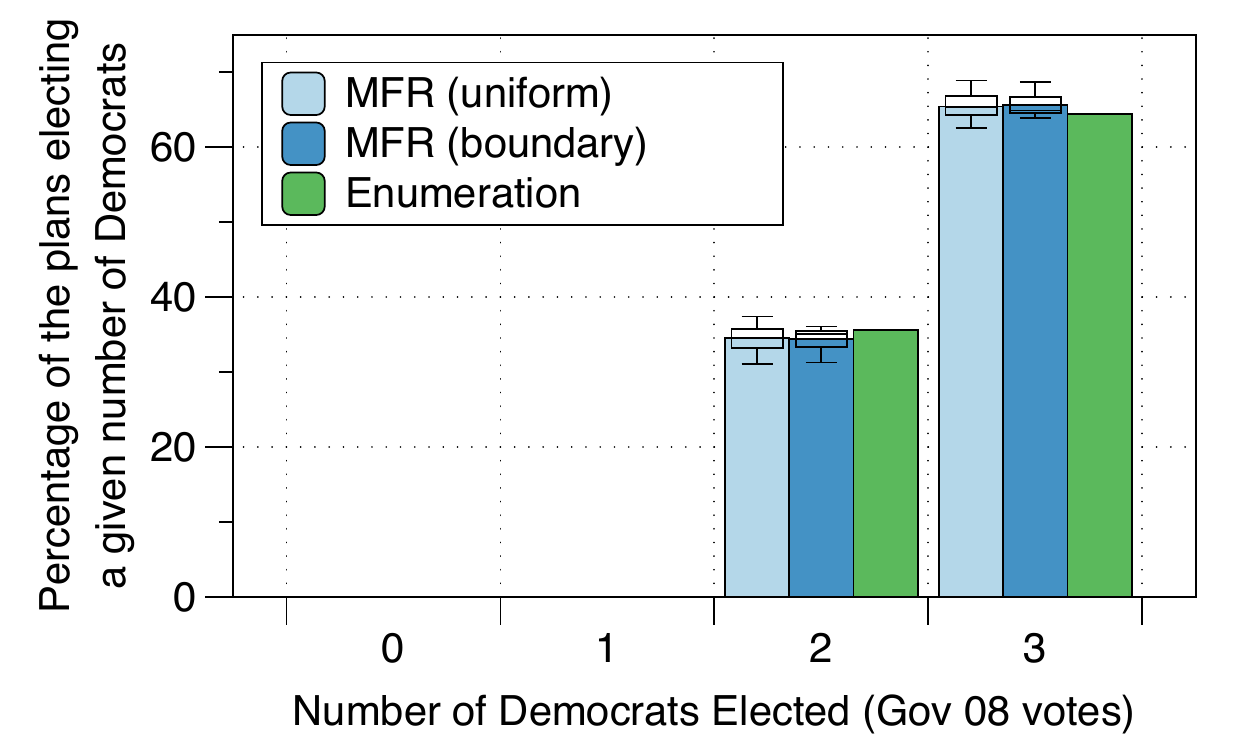}
\includegraphics[width=0.4 \linewidth]{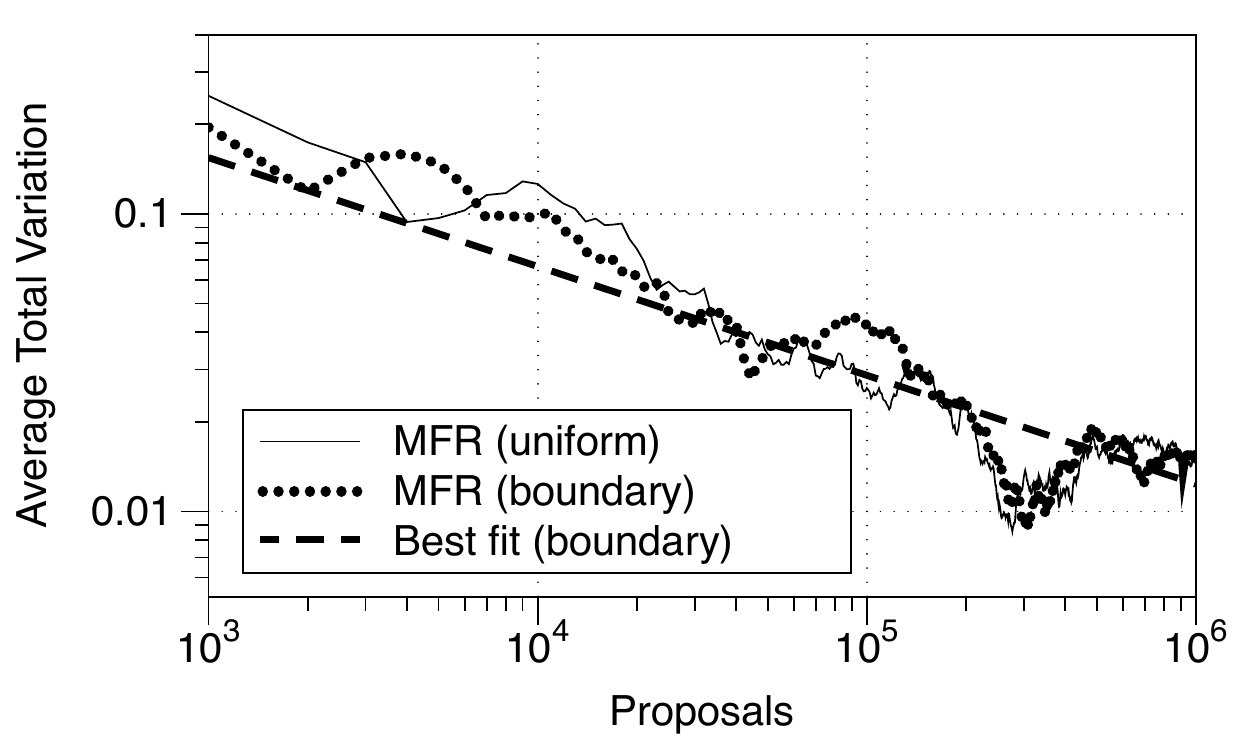}
\caption{We display the distribution of the number of Democrats elected in the three Duplin-Onslow state house districts in the case that we change the districts but fix the votes to be those of the 2008 Governor's race (left). We use two methods of picking adjacent districts when sampling. We find that the two Metropolized Forest ReCom (MFR) sampling methods (uniform district-uniform neighbor, and boundary weighted) come very close to the exact enumerated distribution (shown in green) after running ten chains of 1 million proposals. We use box plots to demonstrate the variation across chains. We then plot the averaged total variation in each of the ten distributions as a function of the number of proposals (right). We fit a power law to the observations and find that the error decays like $\propto x^{-0.32}$ and $\propto x^{-0.37}$ in the uniform district-uniform neighbor and boundary weighted methods, respectively.}
\label{fig:HistDupOnsGv08}
\end{figure}

Knowing which partisan candidate would win each district does not account for the margins of victory. To consider the margins, we examine a set of three inter-related observables: In each districting plan we order the Democratic vote fractions of the three districts from smallest to largest; we then consider the ordered marginal distributions across the ensemble of plans of the least Democratic district, the second least Democratic district, and the most Democratic district. We compare the three marginal distributions across the enumerated plans and the sampled distributions. We plot the results in Figure~\ref{fig:BoxDupOnsGv08}, again using the votes from the 2008 Governor's race. We find that the marginal distributions of the enumerated districts are extremely close to those of both ensembles.

To study the convergence of the ordered marginal distributions, we examine the rate of convergence as a function of the number of proposals.  For all pairs of chains within a batch of ten chains, we examine the total variation distance between each of the three marginal ensembles and then average this distance.  We then take the maximum averaged total variation across all pairs of chains. To measure the total variation distance, we establish histograms in each ordered marginal with a bin width of 0.2\%. We plot the averaged total variation as a function of the number of proposals in Figure~\ref{fig:BoxDupOnsGv08}. The average total variation decreases (roughly) according to a power law with order $0.39$ for the uniform-district uniform-neighbor cases and with an order of $0.34$ for the boundary weighted case.  However, the two plots are indistinguishable by eye and we only display one. After 1 million proposals the averaged total variation has decreased to less than 0.044\% for both methods and the overall features of all three distributions are extremely similar. 

\begin{figure}
\centering
\includegraphics[width=0.4 \linewidth]{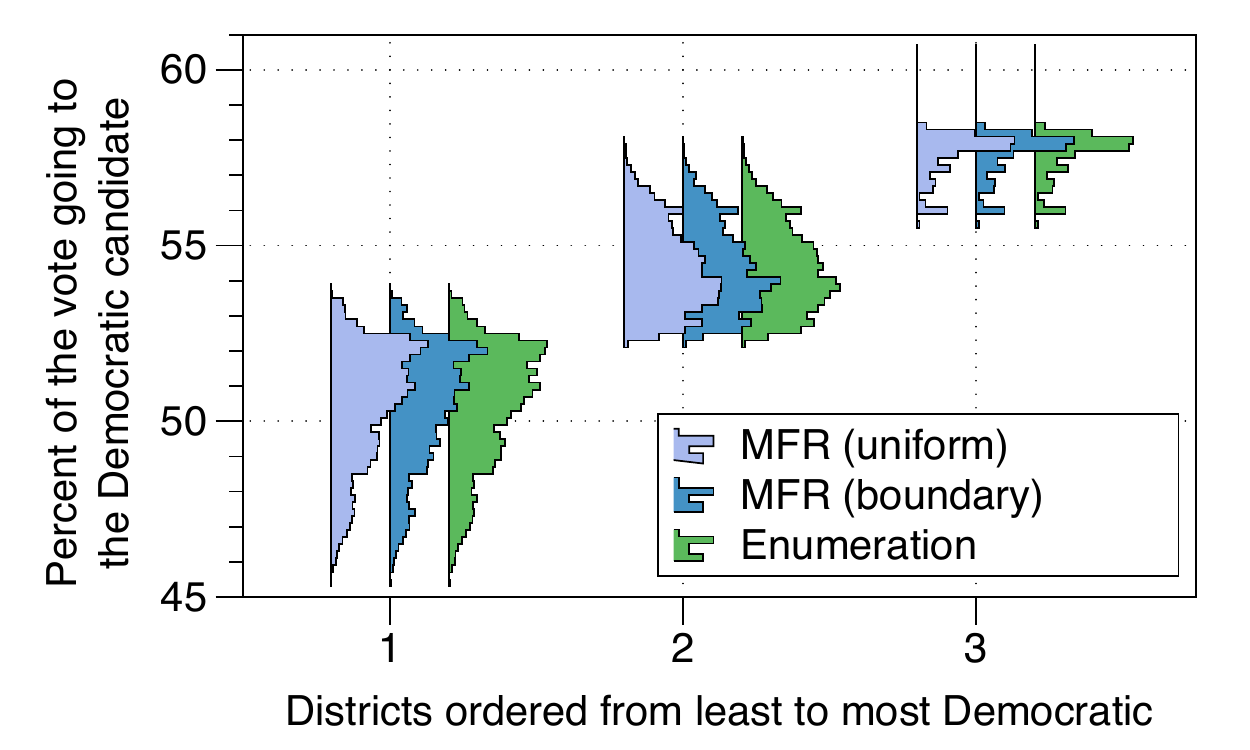}
\includegraphics[width=0.4 \linewidth]{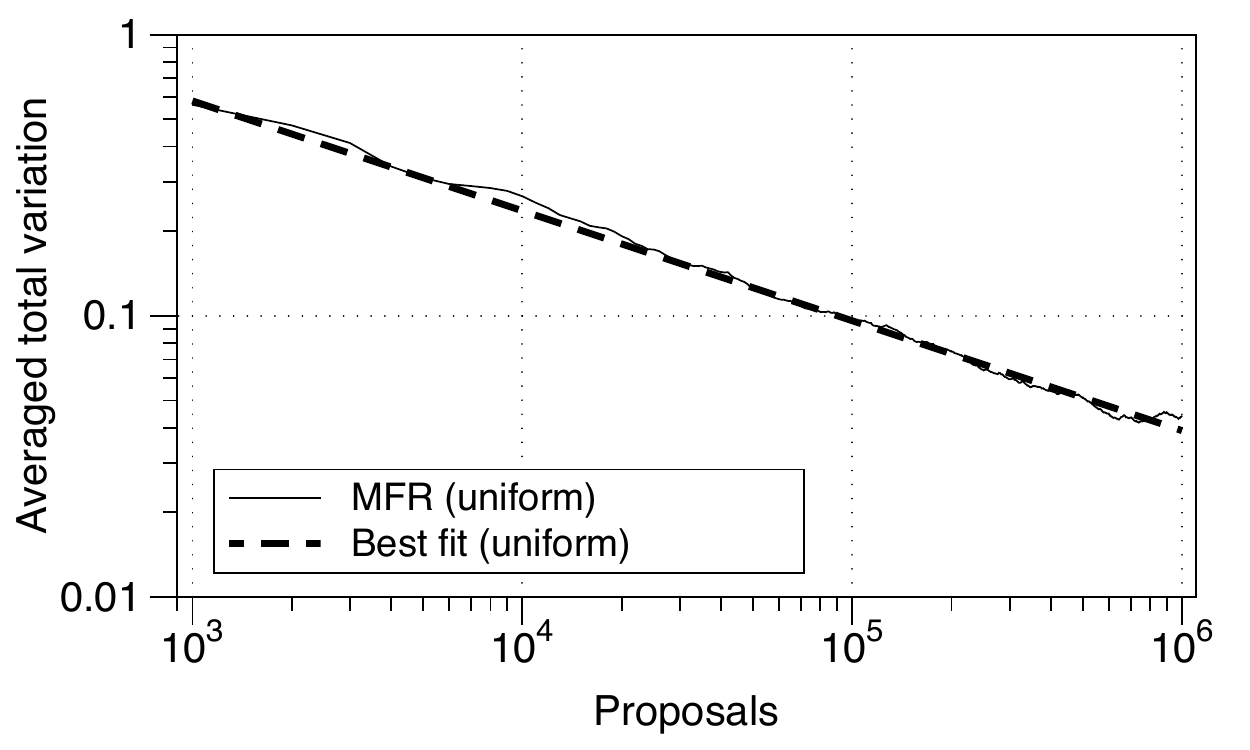}
\caption{We display the ordered marginal distributions for the percent of the vote received by the Democrat in the enumerated ensemble (green) and over the batches of 10 chains of one million proposals of the two Metropolized Forest ReCom (MFR) algorithms (dark and light blue) (left). After one million proposals, the two distributions become extremely similar. We examine the difference between the marginal distributions as a function of the number of proposals (right).  We omit the weighted boundary method as the line nearly perfectly overlaps with the uniform-district uniform-neighbor method displayed in the plot and labled ``uniform.''}
\label{fig:BoxDupOnsGv08}
\end{figure}

If we wished to further assess the partisan structure of a given plan, we may wish to examine the distribution of elected partisan officials and of marginal distributions over a collection of historic vote counts, as each vote count reveals how variations in the voting pattern effect the distributions. In principle we could have repeated the above analysis on any collection of historic votes. We may also generate the marginal distributions over demographic data, such as race, in assessing whether or not a given district's racial properties are typical of the ensemble. For example, one may consider the fraction of the black voting age population within each district. We omit such studies in the present work as we are primarily interested in demonstrating convergence of the chains rather than analyzing the partisan or demographic characteristics of the Duplin-Onslow county cluster.

Finally, we examine the true underlying distribution on the full redistricting space: namely the uniform distribution on the $17{,}653$ enumerated plans. We plot the total variation distance between the chains and the exact result as a function of the number of proposals in Figure~\ref{fig:tvAll}. 
We find a power law relationship with order $0.26$ along with a smaller constant of proportionality -- this metric converges significantly slower than the previous two estimates. By the end of the million proposals, the total variation across the ten independent chains is large after one millions proposals, with an average value of 0.47. This may reflect, the recently proven result \cite{njatDedfordSolomon2019graphs}, which states that sampling the uniform measure on graph partitions is likely NP-hard.

\begin{figure}
\centering
\includegraphics[width=0.4 \linewidth]{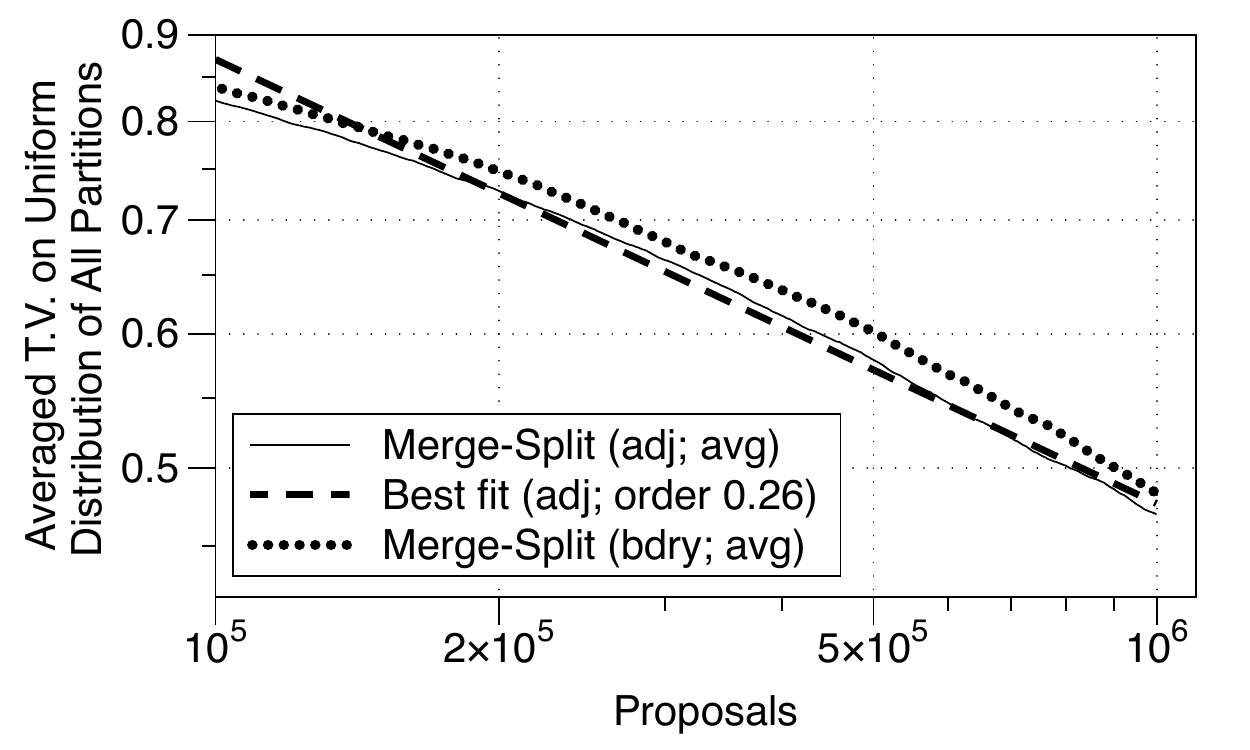}
\caption{We display the total variation between the Metropolized Forest ReCom chain and the uniform distribution on all partitions of Duplin-Onslow. On average, the ten chains hold a total variation of roughly 0.47 after one million proposals. This distribution decays according to a power law with order $0.26$ for the uniform district-uniform neighbor method and $0.25$ for the boundary weighted method.}
\label{fig:tvAll}
\end{figure}

If we place any faith in the power law relations above, then we may predict that the chain should have sampled the uniform distribution with a total variation of roughly 1\% after $2.6\times10^{12}$ proposals. However, the above experiment also demonstrates that the chains may accurately predict observables of interest far faster than they are able to recover the full 
underlying distribution. To make this comparison explicit, using the best fit power law for the distribution of elected Democrats predicts that roughly 2 million proposals would be needed to achieve a total variation of roughly 1\% (which is in the realm of what we observe). Similarly, we would need roughly 32 million proposals for the total variation in the averaged marginal distributions to be roughly 1\%. These predictions suggest that one needs a factor of roughly $10^{6}$ or $10^{5}$ more plans to approximate the distribution on all plans than the partisan outcomes considered.

We also investigate which of the enumerated plans went unsampled. Of the 10 chains in the weighted boundary method, each chain samples roughly two thirds of the $17{,}653$ enumerated plans at least once. When looking over all chains, only 795 of the enumerated plans were not sampled by any of the 10 chains after the 1 million proposals. We examine the histogram of the product of spanning trees, $\tau(\xi)$, for the enumerated plans and the sampled plans in Figure~\ref{fig:stcounthistsDO}.  We find the distributions are nearly identical, however there is some variation in for plans with fewer total spanning forests.  To confirm this observation, we plot the histogram of enumerated plans that were sampled at least once and compare it to the histogram of enumerated plans that were not sampled and see that, in general, the unsampled plans have fewer associated spanning forests (see Figure~\ref{fig:stcounthistsDO}).

\begin{figure}
\centering
\includegraphics[width=0.4 \linewidth]{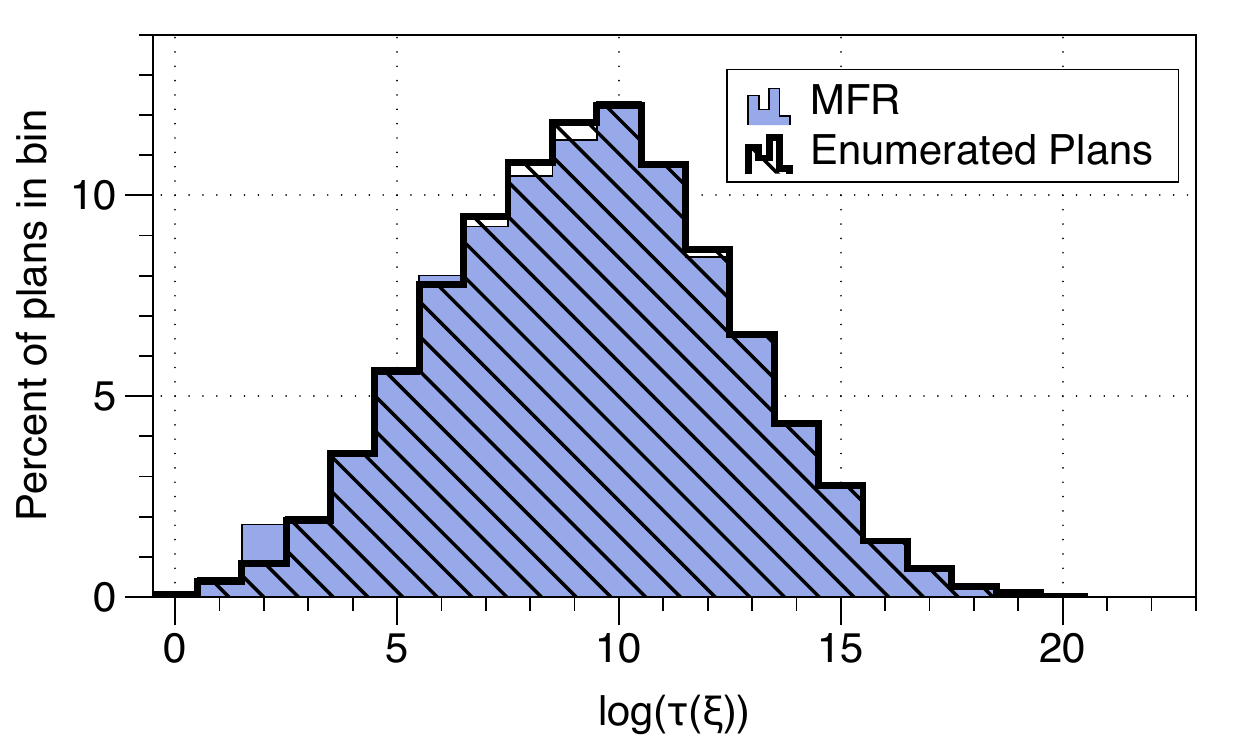}
\includegraphics[width=0.4 \linewidth]{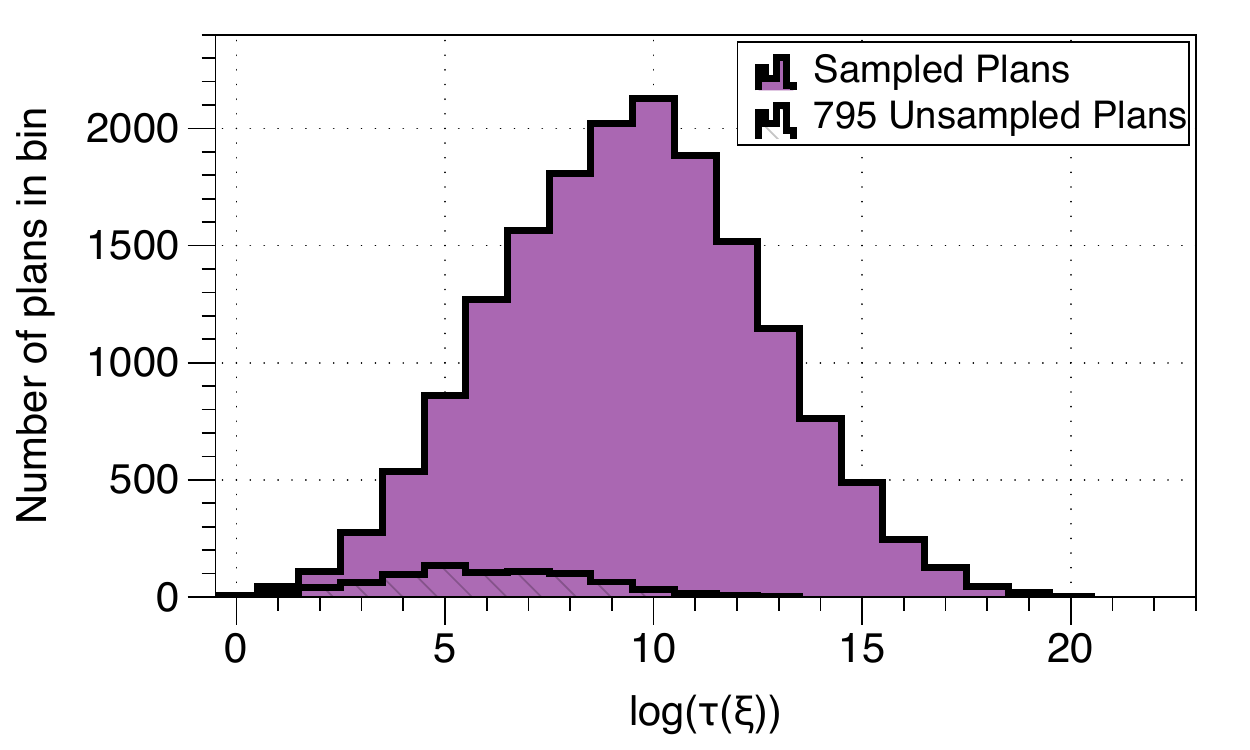}
\caption{On the left, we plot the histogram of the number of spanning forests, $\tau(\xi)$, for the enumerated plans and the sampled plans.  On the right, we examine the histogram of the plans that were sampled at least once and compare it with the plans that we not sampled within the boundary weighted Metropolized Forest ReCom (MFR) batch of 10 chains.
}
\label{fig:stcounthistsDO}
\end{figure}

\subsubsection{Acceptance rate dependence on \texorpdfstring{$\gamma$}{gamma}}
To conclude the Duplin-Onslow example, we examine the acceptance rates as a function of $\gamma$. 
We generate eight additional chains on the uniform district-uniform neighbor method, the $i$th of which has a $\gamma = (i-1)/8$. 
The score function on these chains only considers the population deviation, which is handled via the proposal; for Duplin-Onslow we again use a threshold of 5\%.
We run the chains for one million proposals and estimate the acceptance rates in Figure~\ref{fig:acceptance}. As discussed previously, we see the highest acceptance rates for low $\gamma$ and the lowest acceptance rates for large $\gamma$. The relationship between acceptance with $\gamma$ appears to be monotonically decreasing. For the Duplin-Onslow graph, it yields an acceptance of just over 40\% for $\gamma = 0$, and just under 20\% for $\gamma = 1$.

\begin{figure}
\centering
\includegraphics[width=0.4 \linewidth]{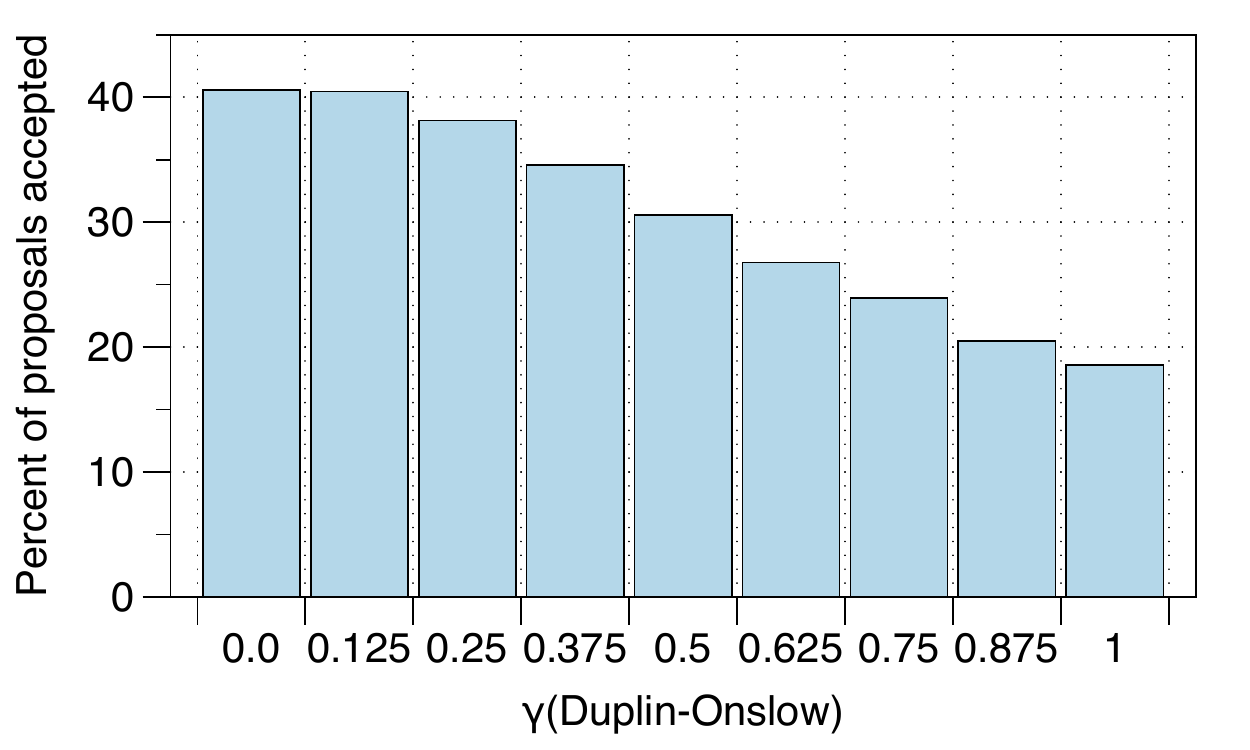}
\caption{We display the acceptance rate in the Metropolized Forest ReCom algorithm as a function of $\gamma$ in the Duplin-Onslow county cluster under the uniform district-uniform neighbor pair proposal.}
\label{fig:acceptance}
\end{figure}

\subsection{North Carolina Congressional Districts} We continue by examining the Metropolized Forest ReCom algorithm applied to a larger problem.  We examine the 13 congressional districts of North Carolina.  Our energy only encompasses a hard population constraint of 1.5\% deviation. Legally compliant population deviations for congressional districts are significantly smaller than this, however, we have previously shown that zeroing out populations with similar deviations leads to negligible shifts in the two observables of interest (see \cite{herschlag2020quantifying}).  We remark that the plans we will sample will not account for other redistricting criteria such as fulfilling the voting rights act and preserving counties; as a consequence, care should be taken in using the ensembles below in discussions around policy. Instead, we intend this to be an investigation of the algorithm

Without constraints, districts sampled from the space of uniform (population compliant) partitions may have jagged boundaries and form highly non-compact districts that look like space filling curves (see, e.g. \cite{deford2019recombination}).  Under the ReCom Markov chain, districts tend to be significantly more compact, however, built in rejections for non-compact plans may be included \cite{deford2019recombination}. As noted above, when $\gamma = 0$, our Metropolized Forest ReCom is a Metropolized version of ReCom with an expanded state space; because of this ReCom should focus sampling on districts that are comparable to the forest distribution with $\gamma = 0$. We therefore begin by examining convergence properties of Metropolized Forest ReCom for $\gamma = 0$ with no additional constraints (other than population) and with a constraint that ensures no district within a plan has an isoperemetric ratio of more than 110 (or Polsby-Popper less than 0.114; this threshold allows districts to be slightly less compact than the least compact district in the 2016 North Carolina congressional districts).

We examine the convergence properties of both cases by running a batch of ten chains for each method, each starting with randomized initial conditions, and using the ``boundary-weighted'' method of choosing adjacent districts. When bounding the district compactness, we ensure that the initial states contain districts that are all more compact than the allowable limit. When unconstrained, we find an acceptance rate of just over a fourth: Each chain accepts more than $260{,}000$ proposals and accepts $265{,}146.3$ proposals on average. When ensuring compact districts, we find an acceptance rate of just over a fifth: Each chain accepts more than $210{,}000$ proposals and accepts $211{,}966.3$ proposals on average.

To test the convergence, we use votes cast in the 2012 presidential race and plot the histogram of the number of Democrats that would have been elected in the ensemble of districting plans under these votes (see Figure~\ref{fig:HistNC}). We find that the distribution of elected Democrats converges to nearly the same distribution, independent of the starting location.  We also examine the rate of convergence by examining the maximum pairwise total variation across all histograms of elected Democrats after a given number of proposals.  Taking a best fit power law, we find that the order of convergence is $O(0.52)$ without any constraint on compactness and $O(0.4)$ with the constraint (see the right of Figure~\ref{fig:HistNC}).

\begin{figure}
\centering
\includegraphics[width=0.4 \linewidth]{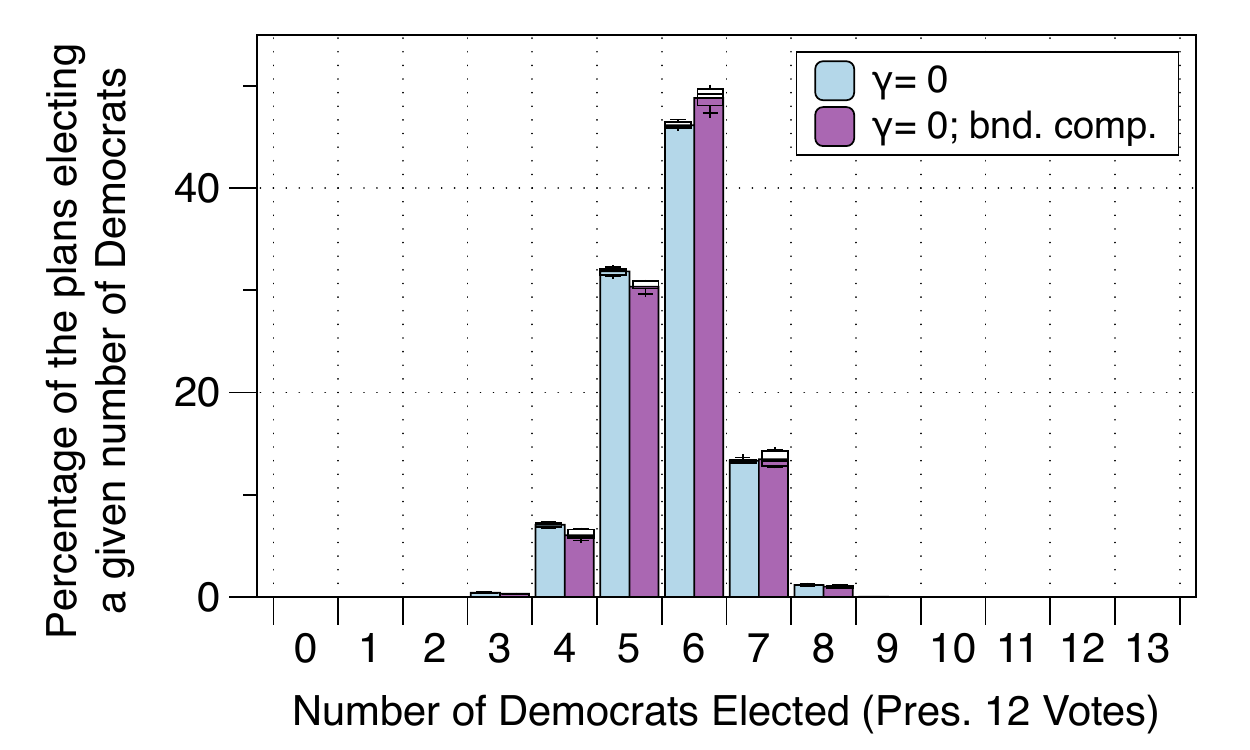}
\includegraphics[width=0.4 \linewidth]{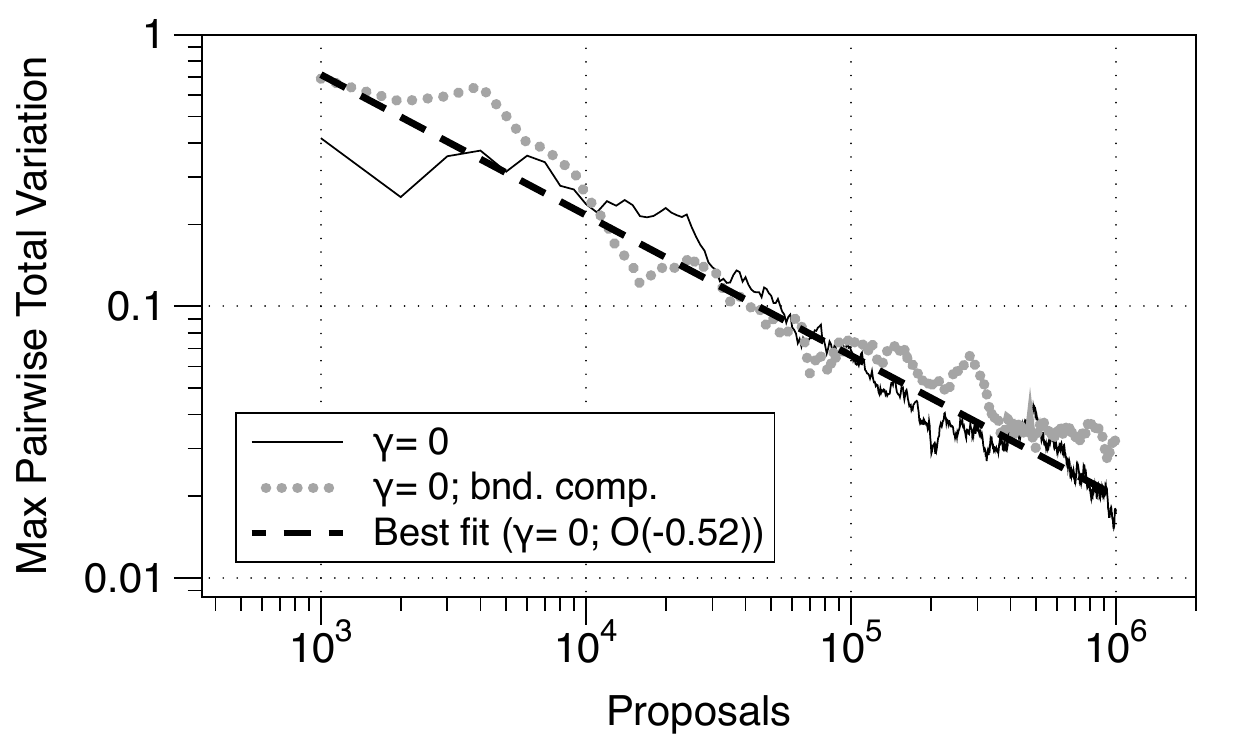}
\caption{We display the distribution of the number of Democrats elected in the 13 districts in the North-Carolina congressional districts in the case that we change the districts but fix the votes to be those of the 2012 Presidential race (left). We use box plots to demonstrate the variation across chains becomes small after 1 million proposals. We plot the maximal pairwise total variation between the ten distributions as a function of the number of proposals (right). We fit a power law to the observations and find that the error decays like $\propto x^{-0.52}$ when $\gamma = 0$ and $O(0.4)$ when $\gamma = 0$ and with constrained compactness.}
\label{fig:HistNC}
\end{figure}

We continue by examining the ordered marginal distributions of Democratic vote fractions across the ensemble of plans.  We create histograms for the marginal distributions with a bin width of 0.2\%. We then compute the average total variation distance across all 13 marginal ensembles between each pair of chains and find the pair with the maximum average total variation distance across all pairs. We display the two most distinct chains after 1 million proposals along with the distribution comprised of all 10 chains in Figure~\ref{fig:BoxNC12}, for the batch that constrains compactness; we do this because the chain that does not constrain compactness has even smaller differences across chains. The 10 chains converge to nearly identical distribution, both when constraining the compactness or not.  We take a best fit power law by examining the maximum pairwise distribution as a function of the number of proposals and find a convergence rate of $0.5$ when the compactness is unconstrained and $0.44$ when the compactness is bounded.

\begin{figure}
\centering
\includegraphics[height=3.5cm]{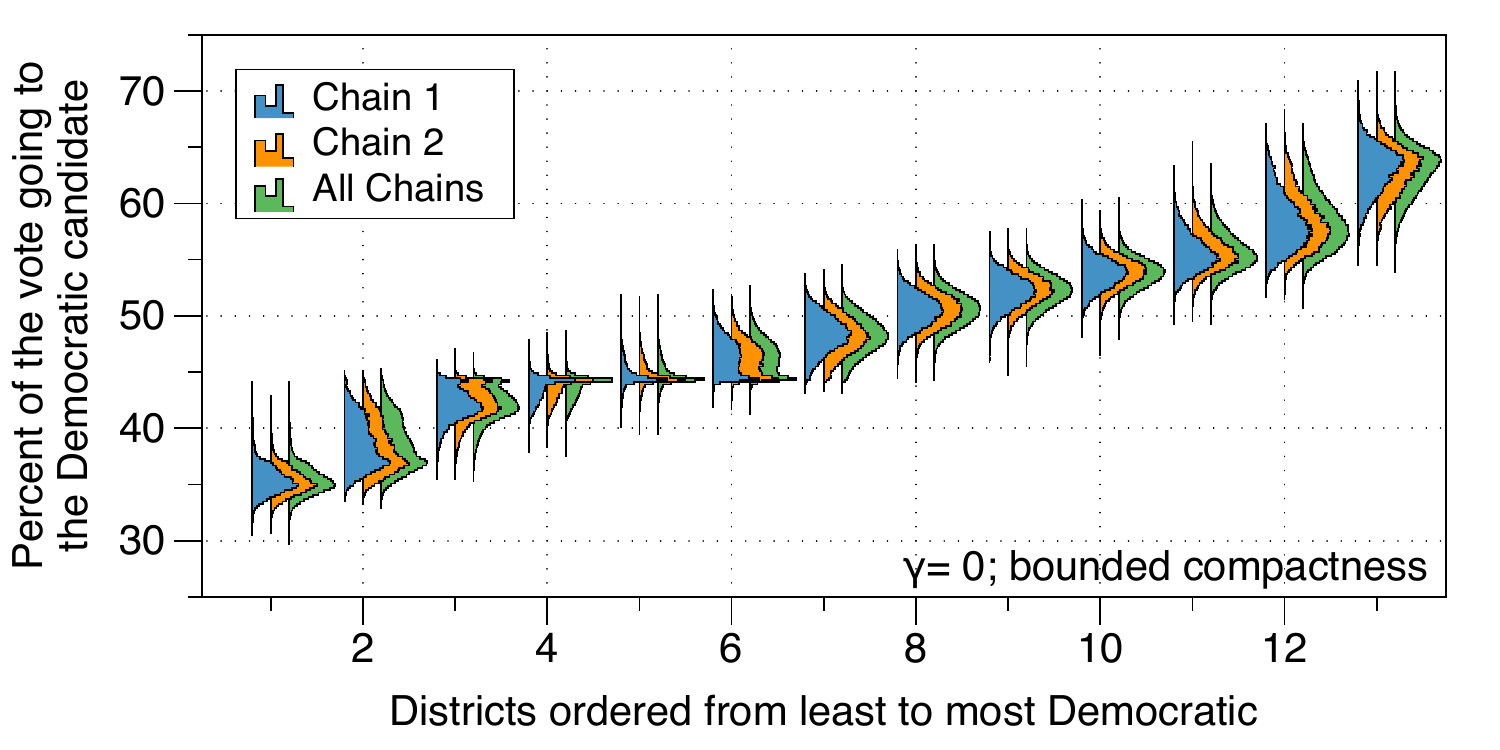}
\includegraphics[height=3.5cm]{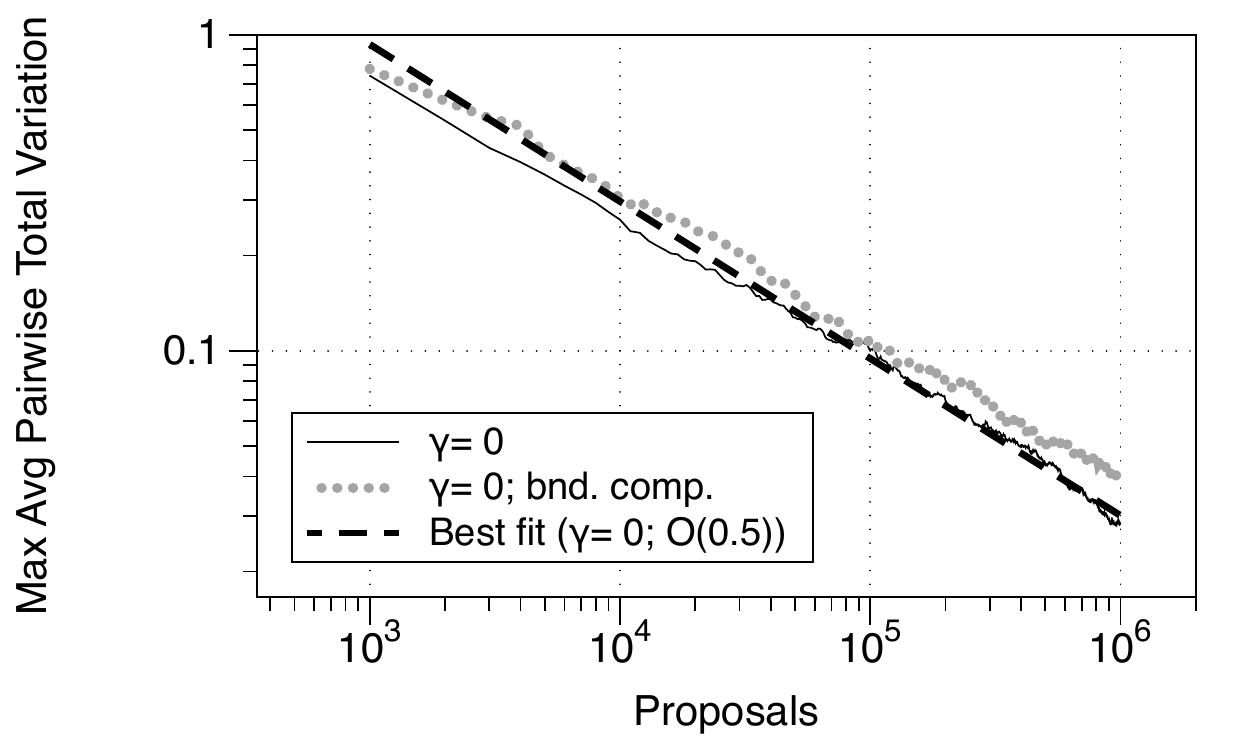}
\caption{We display the ordered marginal distributions for the percent of the vote received by the Democrat in all ten chains (green) and over the two chains that are most distinct after one million proposals (orange and blue) for the batch with $\gamma = 0$ and constrained compactness (left). After one million proposals, the distributions become extremely similar. We then take the maximum pairwise total variation averaged across all marginal distributions as a function of the number of proposals for the batches with and without compactness constrains with $\gamma = 0$ (right). We find a rate of convergence on the order of $0.5$ when there is no constraint on compactness and $0.44$ when there is a constraint.}
\label{fig:BoxNC12}
\end{figure}

\subsubsection{Changing the invariant measure}
In setting $\gamma = 0$, we concentrate on plans with higher numbers of spanning forests $\tau$.  Plans with higher numbers of spanning forests tend to be more compact, however, redistricting committees may be reluctant to use the number of spanning trees of a district as a measure of compactness.  A more traditional measure of compactness is the Polsby-Popper score, which is the inverse of the isoperemetric ratio scaled to vary from zero to one; more precisely, the Polsby-Popper score is a district's area divided by the square of its perimeter multipled by $4\pi$.  We examine the correlation between the Polsby-Popper score and the spanning tree counts in the context of single districts and entire plans in Figure~\ref{fig:stvpp}.  We find a positive correlation of only 0.41 between the number of spanning trees within a district and the district's Polsby-Popper score, and a correlation of 0.2 between the total number of spanning forests defining a plan (partition) compared to the average Polsby-Popper score of the plans (partitions).

\begin{figure}
\centering
\includegraphics[width=0.4 \linewidth]{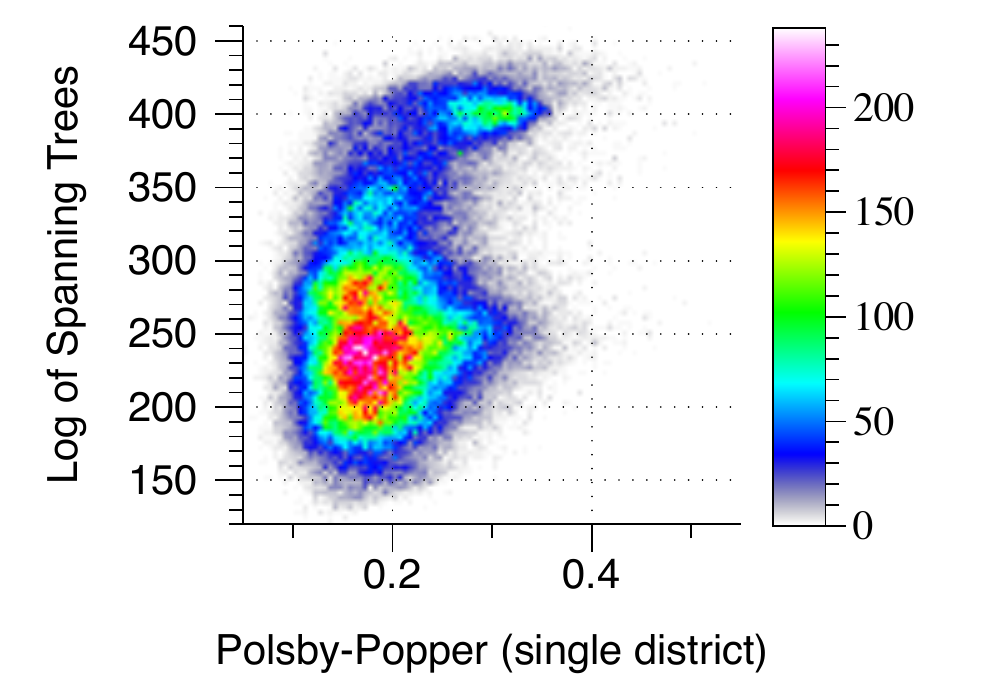}
\includegraphics[width=0.4 \linewidth]{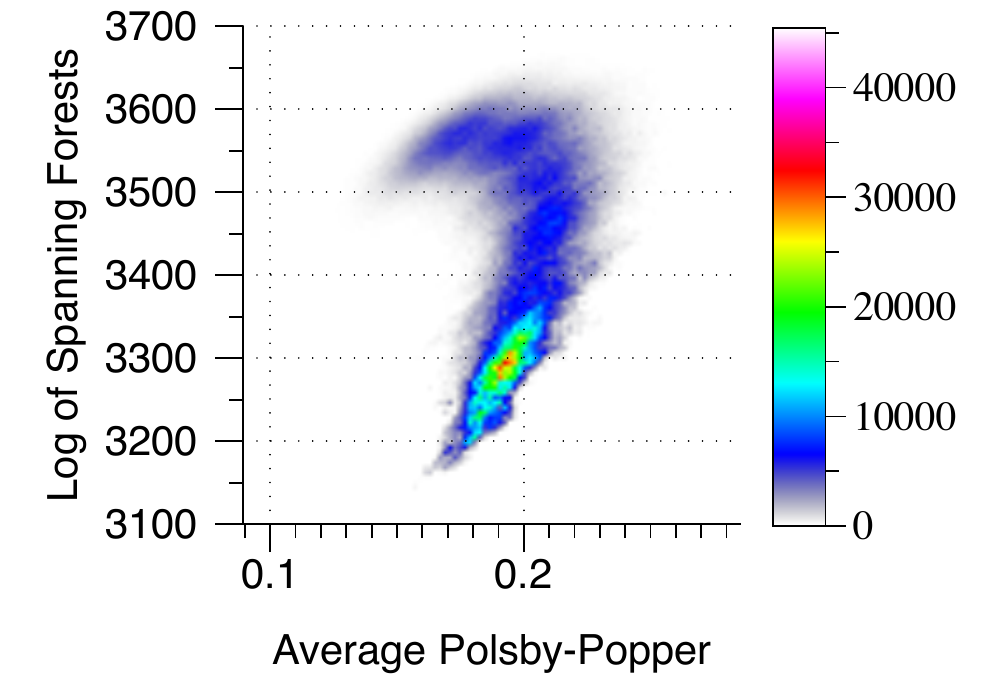}
\caption{We display heat maps for the number of spanning trees in a district with respect to the district's Polsby-Popper score (left).  We also compare the number of spanning forests in a plan, $\tau(\xi)$, with respect to the average Polsby-Popper score of the districts within the plan (right).  In the former case, we find a correlation between these two measure of 0.41, and in the latter we find a correlation of 0.2.}
\label{fig:stvpp}
\end{figure}

It is generally understood that different compactness measures may lead to different conclusions \cite{barnes2018gerrymandering}, but it is unclear how tenable spanning tree/forest counts could be as a measure of compactness. When $\gamma = 0$, the measure places more importance on plans with higher spanning forest counts.  One may wish to remove this relative importance between plans (which is to set $\gamma = 1$).
One such avenue to remove the preference of plans with higher spanning forest counts is to sample at $\gamma = 0$ and to then use importance sampling, however this will almost certainly not work due to vast differences $\tau(\xi)$ across different plans; indeed, in a chain with $\gamma = 0$ and bounded compactness, the spanning forest counts within a standard deviation differ by an order of $10^{23}$ (e.g. see below in Figure~\ref{fig:ppvprop})). One could instead attempt to sample at $\gamma = 1$ while utilizing a score function $J(\xi)$ which concentrates around compact plans.

Setting $\gamma = 1$ also allows one to efficiently mix any proposals from Forest ReCom with those from a single node flip algorithm. In this case, one could ignore the trees when measuring relative preferences/probabilities between plans or remove the trees to evolve a single node flip algorithm.  In this case, spanning trees could be re-initialized as needed, depending on whether a Forest ReCom or single node flip proposal was being considered.

Therefore we investigate the algorithm at $\gamma = 1$, and begin with no constraints or additions to the score function. In this case, we see no evidence of convergence after 10 million steps as the total variation distance between chains remains very high in both observables. The reason for this may be that uniformly drawn partitions on this graph may have undulating boundaries with highly non-compact districts (e.g. see \cite{deford2019recombination}); the consequence would be that when districts are merged along a boundary, the proposal will tend to smooth the boundary which will lead to a higher number spanning trees in the proposed districts and lead to a higher chance of rejection.  Indeed, we see that the compactness greatly decreases as the chain evolves in Figure~\ref{fig:ppvprop} which supports this hypothesis.

We continue to investigate the $\gamma = 1$ case by adjusting the score function $J(\xi)$ to include a compactness term.  We do this by adding a sum of the isoperimetric ratios of the districts and weight this sum by a factor $w_c$ so that now
\begin{align}
J(\xi) &= J_{pop}(\xi) + w_c J_c(\xi)\quad\text{where}\\
J_c(\xi) &= \sum_{d = 1}^D P_d^2 / A_d,\quad
\end{align}
and $J_{pop}(\xi)$ is infinite when the population deviation of a district is beyond an allowable threshold (i.e. leads to an unbalanced partition), $P_d$ is the perimeter, or boundary length, of district $d$, and $A_d$ is the area of district $d$.  We tune $w_c$ so that we are likely to sample districts that are as compact as those sampled with $\gamma = 0$ and find an acceptable value of $w_c = 0.45$. We show the similarity of the average compactness scores between the two $\gamma = 0, w_c = 0$ chains (one with constraints; one without) and $\gamma = 1, w_c = 0.45$ in Figure~\ref{fig:ppvprop}.  In the same figure, we also examine the evolution of the least compact district in these runs and find that the run with $\gamma = 1, w_c = 0.45$ leads to more compact least-compact districts than the run with $\gamma = 0, w_c = 0$. When $\gamma = 1$ and $w_c = 0$ we see the plans become extremely non-compact.  We also display the number of spanning trees. Changing $w_c$ from 0 to 0.45 reduces the discrepancy in the number of spanning trees when compared to the $\gamma = 0$ chains. However, the number of spanning trees found in the $\gamma = 1, w_c = 0.45$ chain are still significantly smaller than those of the $\gamma = 0$ chains. When $\gamma =1$ and $w_c = 0.45$, we see no evidence of convergence after 1 million steps.

\begin{figure}
\centering
\includegraphics[width=0.3 \linewidth]{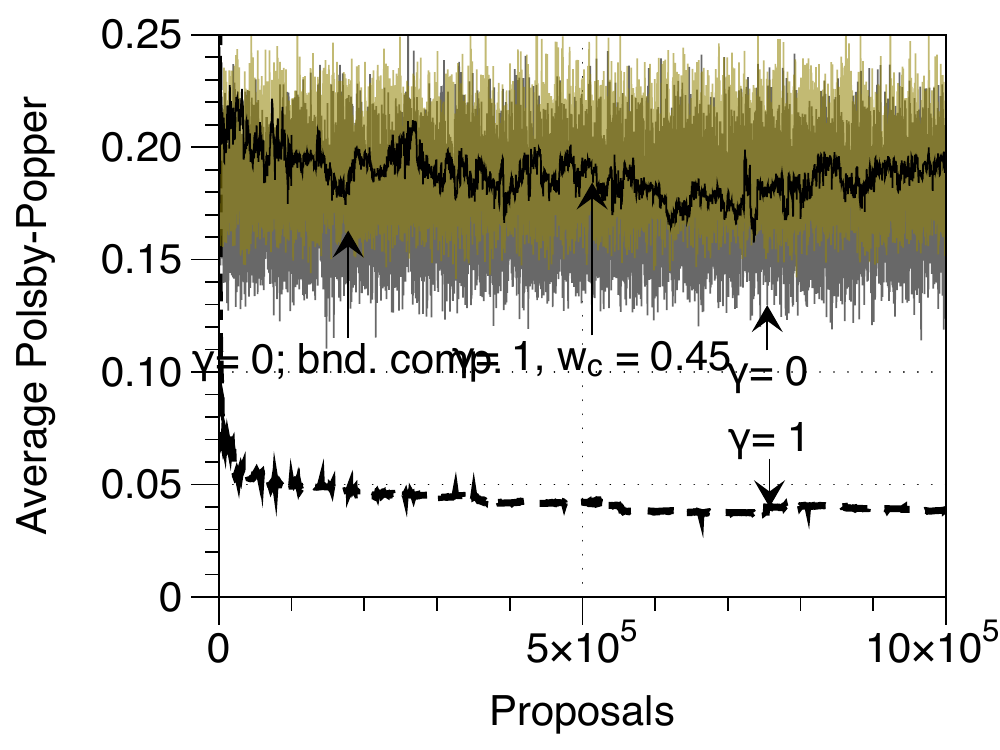}
\includegraphics[width=0.3 \linewidth]{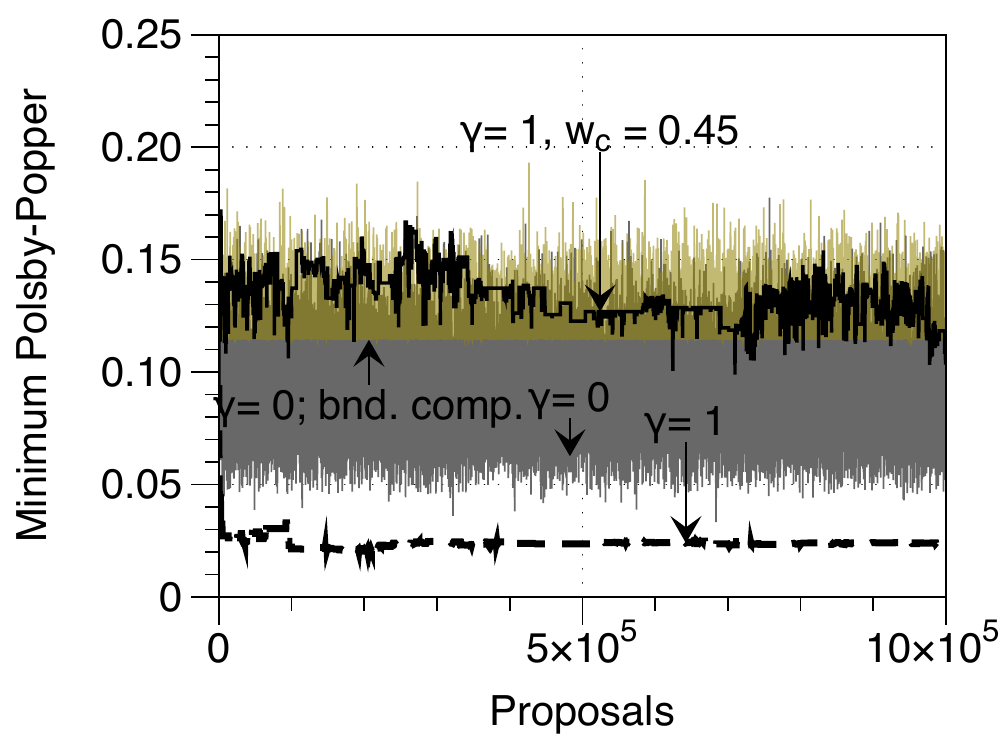}
\includegraphics[width=0.3 \linewidth]{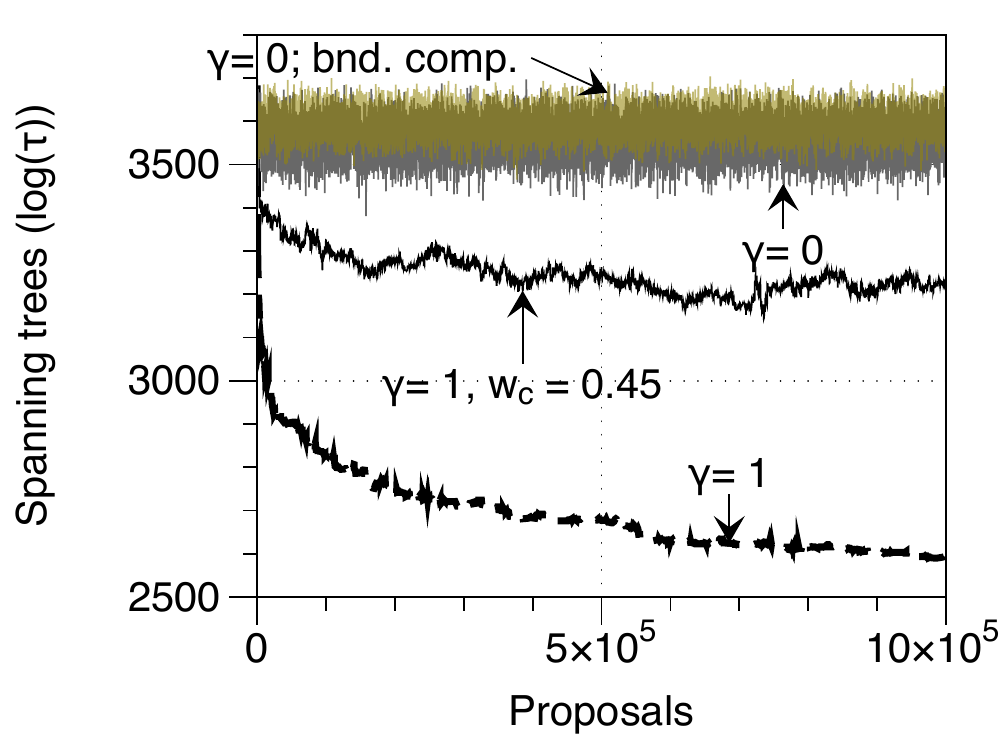}
\caption{We display the average Polsby-Popper scores over a plan, the minimal Polsby-Popper score over all districts within a plan, and the total number of spanning forests as a function of the number of proposals on the left, center, and right, respectively. The gold line represents the $\gamma = 0$ case with bounded compactness; it is semi-transparent and is displayed over the gray line that represents the $\gamma = 0$ case without bounded compactness.  In the center, the gold line is strictly bounded below and thus appears flat.}
\label{fig:ppvprop}
\end{figure}


Having observed convergence for $\gamma = 0$ and a lack of convergence at $\gamma = 1$, we conclude by examining the potential for a tempering algorithm to traverse between these two regions.  There are two primary methods of tempering: simulated and parallel.  
In parallel tempering, we consider the product measure of $N$ chains. One chain has a marginal distribution that mixes quickly, and another samples from the target distribution.  The remaining chains are used to allow chains to swap states via a swap proposal that is made reversible through Metropolis-Hastings.  The idea is that a state may start in the chain that quickly mixes, but may gradually swap chains until it finds itself in the chain with the target distribution. When it is in the chain with the target distribution, it will be used to collect observables of interest.

In the current context, we could temper by expanding the state space to include a variety of $\gamma$'s and $w_c$'s that linearly interpolate between a known fast mixing chain ($\gamma = w_c = 0$) and a target distribution ($\gamma = 1$, $w_c = 0.45$).  One must provide a sufficient number of intermediate values so that the states can swap chains and move to and from both extremes.

To investigate the potential efficacy of tempering, we examine nine batches of ten chains, each run for one million proposals, and linearly interpolate between parameters so that $\gamma = i/8$ and $w_c = 0.45\cdot i/8$ for $0\leq i \leq 8$.  We first examine the convergence properties of the nine batches chains by examining the maximum pairwise average total variation distance between all chains within a batch.  We plot the result in Figure~\ref{fig:TVvsGamma}, and find that the distributions gradually (close to linearly) diverge as we simultaneously increase $\gamma$ and $w_c$ until the deviation is close to 1 at $(\gamma,w_c) = (1, 0.45)$. To examine the possible efficacy of tempering, we examine how the distributions of spanning trees and average Polsby-Popper score overlap within a single chain in each batch.  We only consider elements in the chains after $200{,}000$ proposals and plot both sets of distributions in Figure~\ref{fig:overlapingDistsTemp}. We find that both the distributions of spanning trees and distributions of Polsby-Popper have significant overlap across successive increments as $\gamma$ and $w_c$ increase/decrease.  We also examine the probability of seeing the samples from each chain according to the target measure with $\gamma = 1$ and $w_c = 0.45$; we see that the chains at different parameters sample from much different locations in the space, but that these regions overlap as the parameters vary.

\begin{figure}
\centering
\includegraphics[width=0.4 \linewidth]{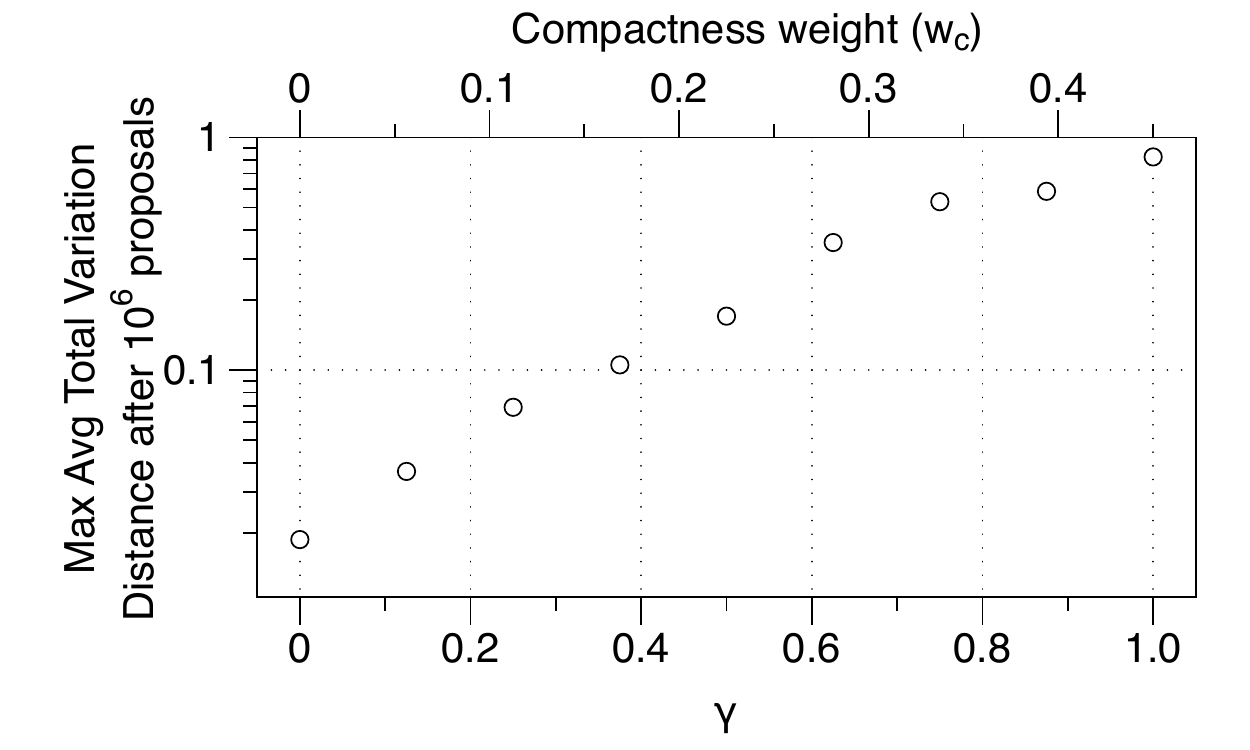}
\caption{We display the maximum pairwise total variation distance across 10 chains after 1 million proposals for different values of $\gamma$.}
\label{fig:TVvsGamma}
\end{figure}

\begin{figure}
\centering
\includegraphics[width=0.3 \linewidth]{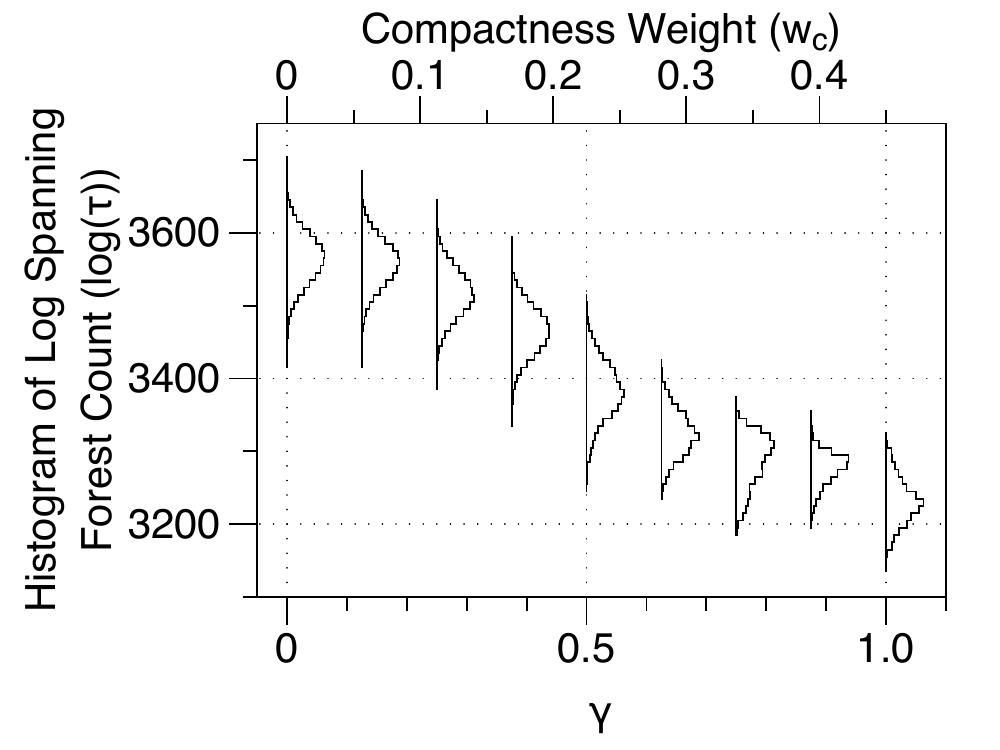}
\includegraphics[width=0.3 \linewidth]{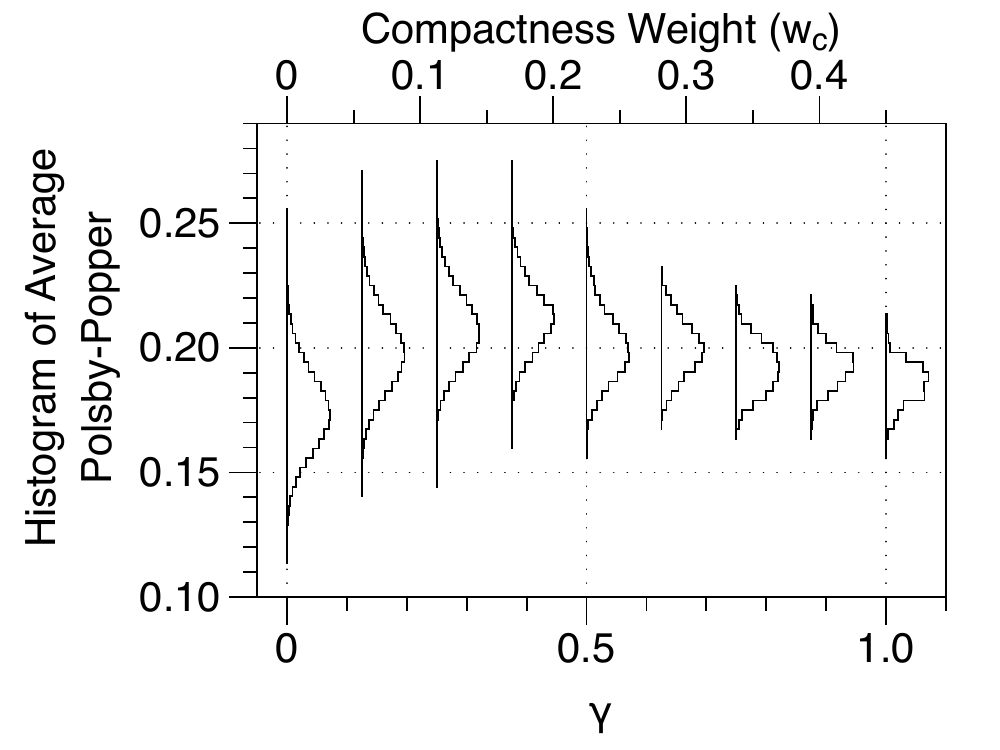}
\includegraphics[width=0.3 \linewidth]{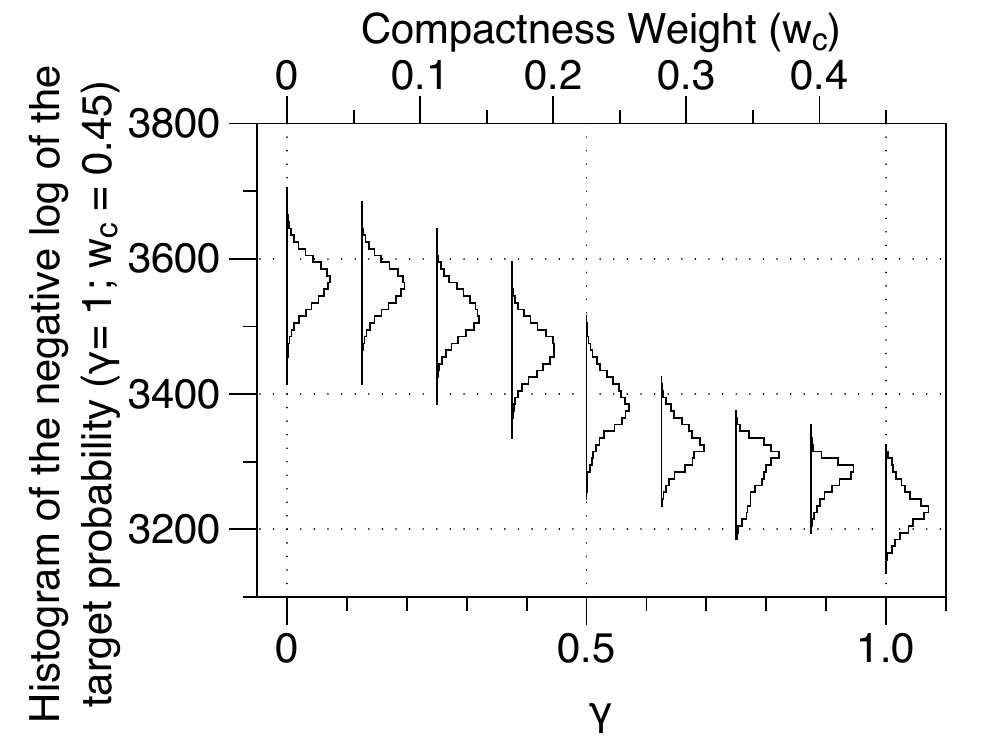}
\caption{We display the distributions of the spanning trees (left) and compactness (right) over a variety of parameters $(\gamma, w_c)$ that linearly interpolate between $(\gamma, w_c) = (0,0)$ and $(\gamma, w_c) = (1,0.45)$.  Finally, we also display the probability of the sampled plans from each chain with respect to the target measure of $(\gamma, w_c) = (1,0.45)$.}
\label{fig:overlapingDistsTemp}
\end{figure}

Each of the chains shown in Figure~\ref{fig:overlapingDistsTemp} has the same initial state.  Given the lack of evidence in convergence at higher values of $\gamma$ and $w_c$, it is possible that the chains are sampling around some local minimum.  If the chains are stuck around this local minimum, then the overlap of distributions we see may not hold for other minima; furthermore, we have only visualized marginal distributions related to the energy (in spanning forest counts and compactness scores) rather than joint distributions of these values.  

To further probe the viability of a tempering scheme, we analyze how the chains would mix in the context of parallel tempering.  Using the nine batches of 10 chains from above, with the parameters of each chain defined as $(\gamma = i/8, w_c = 0.45\cdot i/8)$, for $0\leq i\leq 8$, we introduce a tempering proposal that swaps states across chains. For example, if the the state space of the 9 chains is given as $(\xi_0, \xi_1, ..., \xi_i, \xi_{i+1}, ... \xi_8)$, we may propose a new state $(\xi_0, \xi_1, ..., \xi_{i+1}, \xi_i, ... \xi_8)$.  Under Metropolis-Hastings, the acceptance probability of swapping states across chains may is calculated as 
\begin{align*}
A((..., \xi_i, \xi_{i+1}, ...), (..., \xi_{i+1}, \xi_{i}, ...)) &= \min\Bigg(1, \frac{\pi_{i}(\xi_{i+1})\pi_{i+1}(\xi_i)}{\pi_{i}(\xi_{i})\pi_{i+1}(\xi_{i+1})}\Bigg)\quad \text{where}\\
\pi_i(\xi) &= e^{-0.45 (i/8) J_c(\xi)}\tau^{-i/8}.
\end{align*}
For each $i<8$, we randomly sample $10^5$ pairs of plans from the batches of $i$ and $(i+1)$ parameter values from the second half of the walks (i.e. after $500{,}000$ proposals).  We compute the acceptance probability of swapping the two states.  As a heuristic for mixing, we examine probability of accepting a proposed swap.  We find that, on average, 23.9\% of the swap proposals will be accepted.  All adjacent chains have more than a 13\% chance of accepting a proposed swap with the exception of  the $i=4$ and $5$ chains which have a 6.1\% chance of accepting a proposed swap.  Although implementing a tempering algorithm is beyond the scope of this work, these final investigations demonstrate that such a scheme should be reasonable to implement.

\section{Discussion}
Graph partition problems have had drawn long standing interest in mathematics which has only grown over recent decades.  These problems typically involve optimizing partitions according to some objective function.  More recently, however, there has been growing interest in how to sample the space of graph partitions when this space is coupled to a probability distribution. To this end, we have developed a Metropolized Forest ReCom algorithm capable of sampling graph partitions from an invariant measure. 

Our algorithm will likely have favorable mixing properties as the transition proposal will entirely redraw adjacent pairs of districts; the proposal is accepted based primarily on comparing the product of the number trees on the partition, $\tau(\xi(T))$ with $\tau(\xi(T'))$, as well as the relative effective boundaries between the altered districts.
As in \cite{moonVa, deford2019redistricting,DeFord2018}, our proposal chain relies upon (i) uniformly sampling spanning trees on a simply connected subgraph derived from adjacent partitions, and (ii) counting the total number of spanning trees on the subgraph. These two elements are the most expensive part of the proposal and yield a computational complexity that is polynomial in the number of nodes. The complexity is reduced when sampling the uniform measure on the extended tree space as we do not need to consider the number of possible trees that may be drawn on a partition. We have written a python code base that is available at \url{https://git.math.duke.edu/gitlab/gjh/mergesplitcodebase.git}

We have demonstrated the efficacy of our method on two real world problems associated with redistricting: the three state house districts of the Duplin-Onslow county cluster in North Carolina, and the 13 North Carolina congressional districts. In the former (smaller) problem, we have demonstrated good mixing with $\gamma = 1$.  In the latter we have demonstrated poor mixing for $\gamma = 1$, but good mixing for $\gamma = 0$.  We have also shown how convergence varies as $\gamma$ is adjusted from $\gamma = 0$ to $\gamma = 1$.  In principle, one could use tempering to sample at $\gamma = 1$ while using the mixing properties of $\gamma = 0$ to decorrelate the chains. This is important, because in many applications the relevant measures are based partitions rather than spanning forests. 

Although partitioning preferences may be encoded within the energy function $J(\xi)$, the Metropolized Forest ReCom procedure may not be able to efficiently explore certain constraints. However, we have provided an example where it is likely possible to use parallel or simulated tempering to sample partitions in a way that is independent of the underlying tree counts.

Our algorithm has focused on the use of uniform spanning trees on merged pairs of adjacent partitions.  However, it is not dependent on these choices and there are a number of immediate and intriguing extensions including the use of weighted edges to draw non-uniform spanning trees on the space or by merging three adjacent partitions to promote faster mixing. Furthermore, one could adapt a multi-scale framework that strictly preserves higher level structures.\footnote{A preprint of this work is available \cite{autry2020multiscale}}  Such extensions go beyond the goals of the current paper but are promising avenues toward more sophisticated sampling procedures on graph partitions.


\textbf{Acknowledgements:} We thank for Colin Rundel, Adam Graham-Squire, and Stephen Schecter for discussion around enumeration and providing the enumerations used here. We thank Dan Teague, Dana Randall, Debmalya Panigrahi, and Daryl DeFord for useful discussion. This work was initiated through the Discovering Research Mathematics Program between Duke Mathematics and the NC School of Science and Mathematics. JCM and GJH thank the NSF grant DMS-1613337 for partial support of this work.  We thank both SAMSI (DMS-1638521), TRIPODS (CFF-1934964), and the Rhodes Information Intiative for hosting workshops that were important in the development and dissemination of this work. We also thank the Rhodes Information Intiative and the Duke Provost and Deans office for financial support and creating a productive working environment.


\bibliographystyle{siamplain}
\bibliography{biblio}

\appendix

\section{Algorithmic details}

\subsection{Finding Edges to Cut on a Merged Spanning Tree}

We begin by discussing how to find the possible edges to cut from the merged spanning tree formed from the induced subgraph on the partition pair, $\xi_{ij}$.
At a high level, finding all possible edges to cut can be done in two steps. First, direct the tree so that each vertex has out-degree equal to 1, except one vertex called the \textit{root}. Second, starting at the leaves and ending at the root, compute the size of the two subgraphs that would be formed upon removing each edge and identify the edges that may be cut. These sizes can be written in terms of the sizes of the edges ``below" an edge on the rooted tree, plus the population of the connecting vertex.

The first step can be done quickly by choosing the root, then propagating outwards to the leaves. We choose the root $v$, arbitrarily and then direct all edges incident to $v$ towards $v$. We then direct the tree with a breadth first search on the neighboring vertices with edges that have not yet been oriented. For each vertex in the search, direct all of the as-of-yet undirected incident edges towards the vertex, taking note of the new neighboring vertices; repeat until all vertices are accounted for.

Determining this rooted tree step requires identifying the neighbors of each vertex once and choosing the direction for each edge once, which leads to complexity $O(|V|+|E|)$. However, $|E|=|V-1|$ on a tree, so this is $O(|V|)$.

The second step is essentially the opposite, starting from the leaves and working back to the root. For each edge incident to a leaf, define the size of the edge to be the population of the leaf.
For each edge, record the vertex closest to the root. Let $e = (v, v_{new})$ be the outgoing edge from one of these recorded vertices $v$ (i.e. the path toward the root), and $e_1, e_2, \dots$ the ingoing edges (i.e. data propagating from the leaves). The size of $e$ is
\[ \textrm{size}(e)=\text{pop}(v)+\sum\textrm{size}(e_i). \] Add the vertex $v_{new}$ to a second list of vertices; do this for all the vertices in the first list. Repeat the above step for the new list of vertices, and so on until the sizes of all edges are calculated.
If ever the root appears in one of the lists, ignore it as it does not have an outgoing edge. As the algorithm propagates, record the edges that can be cut; these are the edges that have both $\textrm{size}(e)$ and $(\textrm{pop}(G)-\textrm{size}(e))$ within the constraints specified by $J$.

Searching from the leaves to the root requires defining the size of each edge once and using the size of most of the edges in a sum once, as well as putting each vertex in a list, for a total $O(|V|+|E|)=O(|V|)$. Therefore, the whole algorithm is $O(|V|)$.

\subsection{Finding Edges to Cut When Joining Two Spanning Trees with an Edge}

For each edge that connects the two trees $T_i$ and $T_j$, we must compute the probability of choosing that edge to cut. In doing this, we must first determine every possible edge that could have been cut in the tree $T_{(T_i, T_j, e)}$, where $e\in E(T_i, T_j)$.
Na\"ively, one could simply repeat the algorithm presented above for each adjoining edge. The complexity of such an algorithm would be $O(|E(T_i, T_j)|\times |V|)$. Because we are dealing with planar graphs, it may be reasonable to assume that $|E(T_i, T_j)| \propto \sqrt{|V|}$ in which case the algorithm may scale like $O(|V|^{3/2})$. 

Recomputing the edge weights for each possible spanning tree, specified by $e\in E(T_i, T_j)$  is slow in practice and we present a faster method. 
First, we compute the edge weights once for the two split spanning trees, $T_i$ and $T_j$.
For each edge $e$ that connects the two trees, we find the edges in $T_i$ that shares a node with $e$.
For each adjacent edge in $T_i$, we may quickly find the population on each either side of the edge by using the result from the previous step on tree $T_i$.  This gives the amount of population away from the arbitrary root node of $T_i$. Thus removing some edge $e_i$ in $T_i$ will split the tree $T_{(T_i, T_j, e)}$ into a subtree with population
\begin{align}
\text{pop}_{cut}(e_i; T_{(T_i, T_j, e)}) = 
\begin{cases}
\textrm{size}(e_i) + \text{pop}(T_j) & \text{the vertex of $e$ in $T_i$}\\
& \text{is upstream from $e_i$; $e_i\in T_i$}\\
\text{pop}(T_i) - \textrm{size}(e_i) + \text{pop}(T_j) & \text{the vertex of $e$ in $T_i$ is}\\
& \text{downstream from $e_i$; $e_i\in T_i$}
\end{cases}.
\end{align}
The second subtree will have population $\text{pop}(\xi_{ij}) - \text{pop}_{cut}(e_i; T_{(T_i, T_j, e)})$. 
If splitting the tree at the edge does not satisfy population constraints, we truncate the search path. If, however, the population would be satisfied, we add the edge to the count of possible edges to cut and then consider all of its adjacent edges (independent of their directions). 
We continue to search in this way until all paths are truncated or we reach the end of the adjacent edge path along $T_i$.

We then repeat the above process examining edges $e_j\in T_j$, again starting from the edges that are connected to a node in $e$. Although this algorithm has the same complexity as the previous one, truncating the search tree makes it more efficient in practice. 

\subsection{Details on Wilson's Algorithm}

Wilson's Algorithm generates a spanning tree on a graph $G=(V, E)$ uniformly at random \cite{wilson2010dimension}. There are several possible implementations, and one is described below. All implementations use \textit{loop-erased random walks}.

\begin{enumerate}
    \item Choose two vertices $v_1$ and $u$ arbitrarily.
    \item Starting at $v_1$, walk to a neighbor $v_2$ chosen uniformly at random. Then walk from $v_2$ to one of its neighbors $v_3$ uniformly at random, and so on. If a vertex $v_i$ is reached twice, ``erase" all vertices between the two appearances of $v_i$, along with one copy of $v_i$. For example, if the sequence was $v_1, v_2, v_3, v_4, v_2$, it would be replaced with just $v_1, v_2$.
    \item Continue this loop-erased random walk until the vertex $u$ is reached. At that point, ``freeze'' the final traversed path as part of the tree.
    \item Now, choose another vertex $w$ that has not been reached arbitrarily.
    \item Perform a loop-erased random walk starting from $w$ until any vertex that is already frozen is reached.
    \item Freeze the new path.
    \item Choose another vertex $x$ and repeat the same process that was done for $w$, and so on until all vertices are frozen.
\end{enumerate}

Clearly, this gives a spanning tree of the graph. However, it is not clear that it gives one uniformly at random. The typical proof of this fact re-states Wilson's algorithm in terms of another algorithm called ``cycle popping,'' and it is proved that that algorithm generates a uniform spanning tree, so Wilson's Algorithm does, too.

How fast is Wilson's Algorithm? Na\"ively, there is about a $n/(V-1)$ chance each step of stepping to one of $n$ vertices out of the $V$ vertices of the graph (this is exactly true on a complete graph). Then the first part of the algorithm, loop-erased random walk from $v$ to $u$, thus takes about $V$ steps. It is harder to estimate the number of vertices that are reached from a loop-erased random walk, but a reasonable assumption is that it is approximately some proportion $k$ of all the vertices in the graph, with the proportion depending on the graph structure. Then, the walk from $w$ to one of these vertices takes about $kV$ steps and incorporates $k^2$ of the remaining vertices in the graph. Then the next vertex takes $k^2V$ time, then $k^3V$, and so on, finishing in about $\log_{k}(1/V)$ steps, so our total time is about

\begin{align*}
    V+kV+k^2V+\dots+k^{\log_{k}(1/V)}V&=V\frac{1-k^{1-\log_{k}V}}{1-k} \\
    &=\frac{V-k}{1-k} \\
    &\approx V\cdot \text{number of iterations needed} \\
    &= O(V\log V) \text{ on some graphs} \\
    &\le O(V^2)
\end{align*}

In fact, Wilson's algorithm expected runtime is the graph's mean hitting time \cite{wilsonGeneratingRandomSpanning1996}, which is upper bounded by the graph's cover time, which is in turn bounded by $O(V^2)$ in planar graphs \cite{jonassonCoverTimePlanar2000}. This is not (asymptotically) the slowest step as seen in Appendix~\ref{appendix:kirchhoff}.

\subsection{Calculating \texorpdfstring{$\tau$}{tau} Using Kirchhoff's Theorem}
\label{appendix:kirchhoff}

A remarkable theorem is that the number of spanning trees on a graph $G$ may be computed as

\[ \tau(G) = \det{Q^*} \]
where $Q^*$ is any minor of $Q$, the Laplacian matrix on $G$. The Laplacian matrix is equal to $A-D$, where $A$ is the adjacency matrix of $G$ and $D$ is the degree matrix, a diagonal matrix whose entries are vertex degrees.

Constructing $Q^*$ is $O(V+E)$ ($=O(V)$ for planar graphs like ours), and computing the determinant of an $n\times n$ matrix is $O(n^{2.373})$ (though most reasonable implementations are slower) \cite{williamsMultiplyingMatricesN2} \cite{gallPowersTensorsFast2014}. This is, asymptotically, the slowest step.

In practice, we compute the logarithms of these determinants, add and subtract them, then exponentiate, since the actual values of $\tau$ can get extremely large, and we can run into overflow errors otherwise. This does not affect the time complexity.

\end{document}